
\documentclass[journal]{IEEEtran}
\usepackage{color}
\usepackage{lipsum}
\usepackage{graphicx,subcaption}
\usepackage{amsthm}
\usepackage{amssymb}
\usepackage{nomencl}
\usepackage[utf8x]{inputenc}	
\usepackage{enumerate}
\usepackage{float}
\usepackage{array}
\usepackage[labelsep=period]{caption}
\usepackage{booktabs}
\usepackage{multirow}
\usepackage{cite}
\newtheorem{theo}{Theorem}

\makeatletter
\def\hlinew#1{%
  \noalign{\ifnum0=`}\fi\hrule \@height #1 \futurelet
   \reserved@a\@xhline}
\makeatother
\ifCLASSINFOpdf
\else
\fi

\begin{document}
%
\title{Vulnerability Assessment of Load Frequency Control Considering Cyber Security}
\author{Chunyu~Chen,
        Yang Chen,
        Kaifeng Zhang,
        Wenjun Bi,
        Meng Tian 
\thanks{Chunyu Chen is with China University of Mining and Technology (email: chunyuchen@cumt.edu.cn); Yang Chen is with Nanyang Technological University (email: fedora.cy@gmail.com), Yang Chen is the corresponding author; Kaifeng Zhang and Wenjun Bi are with Southeast University, Meng Tian is with Wuhan University}
}
\maketitle

\begin{abstract}
Security is one of the biggest concern in power system operation. Recently, the emerging cyber security threats to operational functions of power systems arouse high public attention, and cybersecurity vulnerability thus become an emerging topic to evaluate compromised operational performance under cyber attack. In this paper, vulnerability of cyber security of load frequency control (LFC) system, which is the key component in energy manage system (EMS), is assessed by exploiting the system response to attacks on LFC variables/parameters.  Two types of attacks: 1) injection attack and 2) scale attack are considered for evaluation. Two evaluation criteria reflecting the damage on system stability and power generation are used to quantify system loss under cyber attacks. Through a sensitivity-based method and attack tree models, the vulnerability of different LFC components is ranked. In addition, a post-intrusion cyber attack detection scheme is proposed. Classification-based schemes using typical classification algorithms are studied and compared to identify different attack scenarios.

\end{abstract}

\begin{IEEEkeywords}
Load frequency control, cyber attack, attack tree, classification, attack detection.
\end{IEEEkeywords}
\subsection{Nomenclature}
\noindent \emph{1) \ Abbreviations}:
\addcontentsline{toc}{section}{Nomenclature}
\begin{IEEEdescription}[\IEEEusemathlabelsep\IEEEsetlabelwidth{$V_1,V_2,V_3$}]
\item[LFC] load frequency control.
\item[MLP] multi-perceptron .
\item[NN] neural network.
\item[SVM] support vector machine.
\item[ACE] area control error.
\item[DFT] discrete fourier transform.
\item[LSTM] long short term memory.
\item[CI] computational intelligence.
\end{IEEEdescription}
\emph{2) \ Variables}:
\addcontentsline{toc}{section}{Nomenclature}
\begin{IEEEdescription}[\IEEEusemathlabelsep\IEEEsetlabelwidth{$V_1,V_2,V_3$}]
\item[$f_0$] nominal frequency.
\item[$f_{ij}$] frequency of Generator $j$ in Area $i$.
\item[$f_i$] frequency of Area $i$.
\item[$\Delta f_i$] frequency deviation of Area $i$.
\item[$com_{ij}$]  LFC command dispatched to Generator $j$ in Area $i$.
\item[$com_{i}$] total LFC command in Area $i$.
\item[$\alpha_{ij}$] allocation coefficient of Generator $j$ in Area $i$.
\item[$\Delta f^j_i$] frequency deviation of Area $j$ interconnected with Area $i$.
\item[$\beta_{ij}$] bias coefficient of Area $j$ interconnected with Area $i$.
\item[$P^{ij}_{tie,0}$] nominal tie-line power between Area $s$ and $i$.
\item[$P^{ij}_{tie}$] tie-line power between Area $j$ and $i$.
\item[$\Delta P^{ij}_{tie}$] tie-line power deviation between Area $j$ and $i$.
\end{IEEEdescription}

\section{Introduction}
With the advent of industrial informatization in modern industrialized societies, cyber security arouses extensive attention, and becomes an emerging issue in performance evaluation of critical infrastructure, Load frequency control (LFC) serves a crucial role in system frequency stabilization, and is greatly dependent on information systems. Hence, by considering the possibility of cyber intrusion, vulnerability of cyber security of LFC should be assessed to better reflect its safety status under compromised conditions, which offers valuable information for subsequent security upgrading and reinforcement.

Diverse research studies were conducted for LFC by focusing on two aspects: 1) attack strategy \cite{esfahani2010robust,tan2017modeling,chen2018} and 2) defense strategy \cite{sridhar2014,chen2017novel}. Some researchers also consider the interaction between the attacker and defender by studying these two aspects together \cite{law2015security}.

In this paper, instead of purely analyzing the strategy and its efficacy to deteriorate (improve) LFC performance, vulnerability of cyber security of LFC system is systematically assessed. To the best of the knowledge of the authors, this specific problem has not been investigated before. Actually, the vulnerability research on other operational functions in EMS has already been conducted \cite{ten2007vulnerability,ten2008vulnerability,liu2010security,hahn2011cyber,hug2012vulnerability,liu2017power}. Ten \cite{ten2007vulnerability} used attack trees to evaluate the cyber security of supervisory control and data acquisition (SCADA) systems, and vulnerabilities were further evaluated from system, scenarios and access points \cite{ten2008vulnerability}. Vulnerability of state estimation under false data injection attack was analyzed in \cite{hug2012vulnerability}. Protection systems were considered when evaluating cyber security by simulating the physical response of power systems to malicious attacks \cite{liu2017power}.

In order to construct the assessment system, operational mechanism under cyber intrusion must be explicitly understood at first. For example, assessment for protection system oriented attacks requires clear understanding of the response of protection mechanism under attack\cite{liu2017power}. Assessment for power state estimation oriented attacks requires the knowledge of how the estimated state is falsified\cite{hug2012vulnerability}.
Apart from the mechanism, the objectives of specific function, which determine the assessment targets, should also be considered.  As for LFC, the objective is to balance the active power of the control area; hence, the degree of intentional power imbalance caused by cyber intrusion should be incorporated into vulnerability assessment. Based on the assessment targets, evaluation criteria can finally be used realize quantification of vulnerability to cyber intrusions.

According to the abovementioned description of prerequisites for assessment system construction, LFC's operational mechanism is analyzed by considering system response when the attack occurs on different variables (parameters) of LFC system. The most directly relevant indices associated with LFC, i.e., frequency and tie-line power deviation, are selected as evaluation criteria. In addition, generation disruption performance is also assessed. To this end, we analyze vulnerability by simulating intrusions into different LFC components (both from analytic and numerical analyses). Then, sensitivity-based method is adopted to quantify the vulnerability, based on which the attack tree model is used to construct the final vulnerability assessment system.

Vulnerability assessment can't mitigate the security risks but reflects the security situation of LFC system and provide apriori guidance on allocation of defense resource. Therefore, post-intrusion detection strategies are investigated as remedial countermeasures, which could promptly indicate whether or not the attack occurs, and then the defender would use the detection information to take appropriate defensive measures. Attack detection is itself an emerging trend on cyber physical system (CPS) safety and privacy. Recent advancement in computational intelligence (CI) increases its compatibility for discerning even the slightest difference, which is the foundation of detection realization\cite{soares2017cluster,jayabrindha2018ant,saha2017eeg}. Unlike natural intelligence (possessed by humans), CI is capable of analyzing complex detection (identification) problem with high accuracy. Moreover, it avoids the burdensome mathematical modelling and analytic reasoning, and is much more user-friendly. When considering the multidiversity of intrusion and the complexity of detection task, it is inevitable to adopt CI techniques in cyber intrusion detection.

In this paper, classification-based detectors are studied by considering typical CI-based classification algorithm. Specifically, multilayer perceptron (MLP), Bayesian network and support vector machine (SVM) are adopted. In order to reduce the computational (structural) complexity and high dependency on massive data, a simple yet effective dimensionality reduction method using fourier transform is applied to extract low-dimensional detection-related features.

The main contribution includes:
\begin{enumerate}
\item Vulnerability assessment for cyber security of LFC is for the first time considered. LFC performance degradation-based criteria are used to evaluate the influence of attacks. Attack tree models are then established to rank the criticality of different attack scenarios, thus laying the groundwork for subsequent protective resource allocation.
\item A post-intrusion detection scheme is presented with the aid of classification algorithms. Time-to-frequency-domain transform is applied to extract relevant features for detection, thus reducing computational complexity and enhancing detection efficiency.
\end{enumerate}

The remaining of the paper is as follows: Section \ref{sec_ba} presents basic backgrounds of cyber attack on LFC system; the influence from compromising different LFC components is systematically studied in Section \ref{sec_att}. Vulnerability assessment is performed in Section \ref{sec_vul}. The two-stage defense paradigm is discussed in Section \ref{sec_mit}. Case studies are performed in Section \ref{sec_cas}.

%
\IEEEpeerreviewmaketitle

\section{Basics of Cyber Attack on Load Frequency Control}
\label{sec_ba}
Consider Area $i$ which contains $m$ generators and is interconnected with other $k$ areas, the diagram of cyber attacks on load frequency control (LFC) of Area $i$ is as shown in Fig. \ref{lfca},
\begin{figure}[htbp]
\centering
\includegraphics[width=3 in]{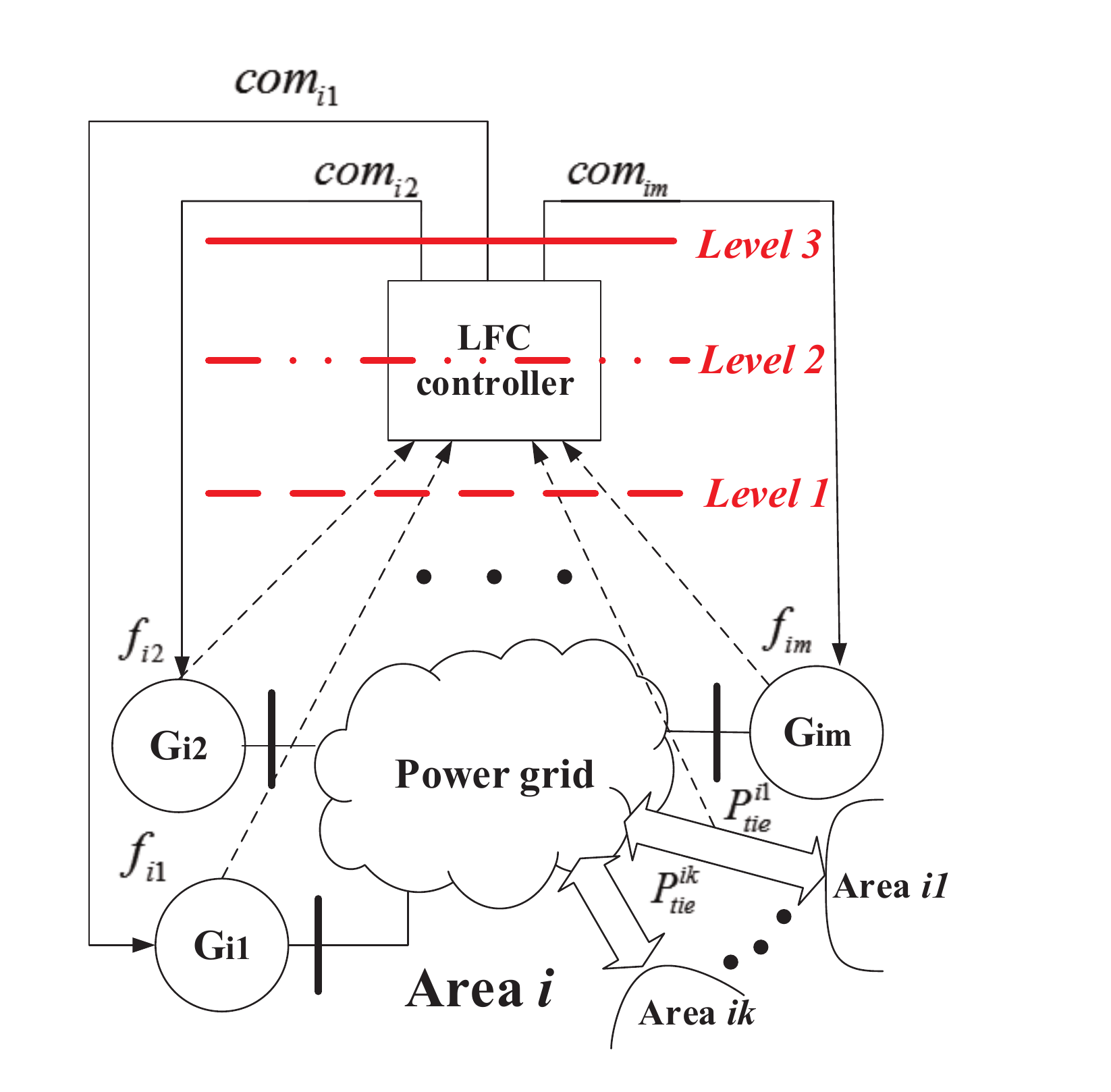}
\caption{Diagram of load frequency control}
\label{lfca}
\end{figure}
where $f_{ij}$ $1\leq j\leq m$ is the frequency measurement of generator $j$ in Area $i$; $P^{is}_{tie}$ $1\leq s\leq k$ is the interchange power of tie-line $s$; $com_{ij}$ $1\leq j\leq m$ is the LFC command dispatched to the generator $j$. Apart from the bottom-level LFC participating generators, the LFC control system can be categorized into three levels(in Fig. \ref{lfca}): Level 1(measurement upload), Level 2(LFC command generation), Level 3 (LFC order dispatch).

\subsection{Area Control Error (ACE) Control Equation}
Before introducing the types of cyber attacks on LFC, ACE control equations, which will be frequently used in studying system response (under attack) in Section \ref{sec_att}, are given as:
\begin{equation}
\begin{array}{l}
AC{E_i} = \sum\nolimits_{s = 1}^k {\Delta {P^{is}_{tie}}}  + {\beta _i}\Delta {f_i}\\
AC{E_{i1}} =  - \Delta {P^{i1}_{tie}} + {\beta _{i1}}\Delta {f^1_{i}}\\
{\kern 1pt} {\kern 1pt} {\kern 1pt} {\kern 1pt} {\kern 1pt} {\kern 1pt} {\kern 1pt} {\kern 1pt} {\kern 1pt} {\kern 1pt} {\kern 1pt} {\kern 1pt} {\kern 1pt} {\kern 1pt} {\kern 1pt} {\kern 1pt} {\kern 1pt} {\kern 1pt} {\kern 1pt} {\kern 1pt} {\kern 1pt} {\kern 1pt} {\kern 1pt} {\kern 1pt} {\kern 1pt} {\kern 1pt} {\kern 1pt} {\kern 1pt} {\kern 1pt} {\kern 1pt} {\kern 1pt} {\kern 1pt} {\kern 1pt} {\kern 1pt} {\kern 1pt}  \vdots \\
AC{E_{ik}} =  - \Delta {P^{ik}_{tie}} + {\beta _{ik}}\Delta {f^k_{i}}
\end{array}
\label{ace}
\end{equation}
where $ACE_{is}$ $1\leq s\leq k$ is ACE of Area $s$; $\Delta f^s_i$ is the frequency deviation of Area $s$; $\beta_{is}$ is the bias coefficient of Area $s$. The goal of LFC can be represented by:
\begin{equation}
\mathop {\lim }\limits_{t \to \infty } \Delta {f_i}\left( t \right) \to 0{\kern 1pt} {\kern 1pt} {\kern 1pt} {\kern 1pt} {\kern 1pt} {\kern 1pt} \mathop {\lim }\limits_{t \to \infty } \Delta P_{tie}^{is}\left( t \right) \to 0
\label{equ_ace}
\end{equation}

\subsection{Types of Cyber Attacks on Load Frequency Control}
\label{subsec_typ}
There exist various types of cyber attacks ranging from wiretapping oriented (e.g., spoofing attack) to security compromise oriented (e.g., integrity attack). In this paper the latter is considered. In respect to LFC, breach of integrity is characterized by parameter/variable falsification. Manipulation schemes of variable falsification are particularly divided as two categories:
\begin{itemize}
\item {\bf{Scale attack}}: Hackers add a gain before the true measurements:
\begin{equation}
x_a = kx_t
\label{scale}
\end{equation}
where $x_a$ is the falsified measurements; $k$ is the gain ($k=1$ represents the real measurement); $x_t$ is the true measurement.
\item {\bf{Injection attack}}: Hackers inject external disturbance signals to distort the original measurements:
\begin{equation}
x_a = x_t+d
\label{inject}
\end{equation}
\end{itemize}
\section{Influence of Cyber Attack on Three Levels of Load Frequency Control}
\label{sec_att}
Based upon the background introduction in Section \ref{sec_ba},the influence of cyber attack (as is described in Section \ref{subsec_typ}) on specific LFC components is detailed by studying the quasi-steady-state response of system frequency. LFC components are categorized into three classes based on which level (in Fig. \ref{lfca}) they belong to. As are shown in Fig. \ref{lfca1}, the components in Level 1 can be easily identified as the frequency (tie-line power measurement) $f_{is}$ ($P^{is}_{tie}$); the components in Level 3 are LFC order $com_{is}$.
\begin{figure}[htbp]
\centering
\includegraphics[width=3.5 in]{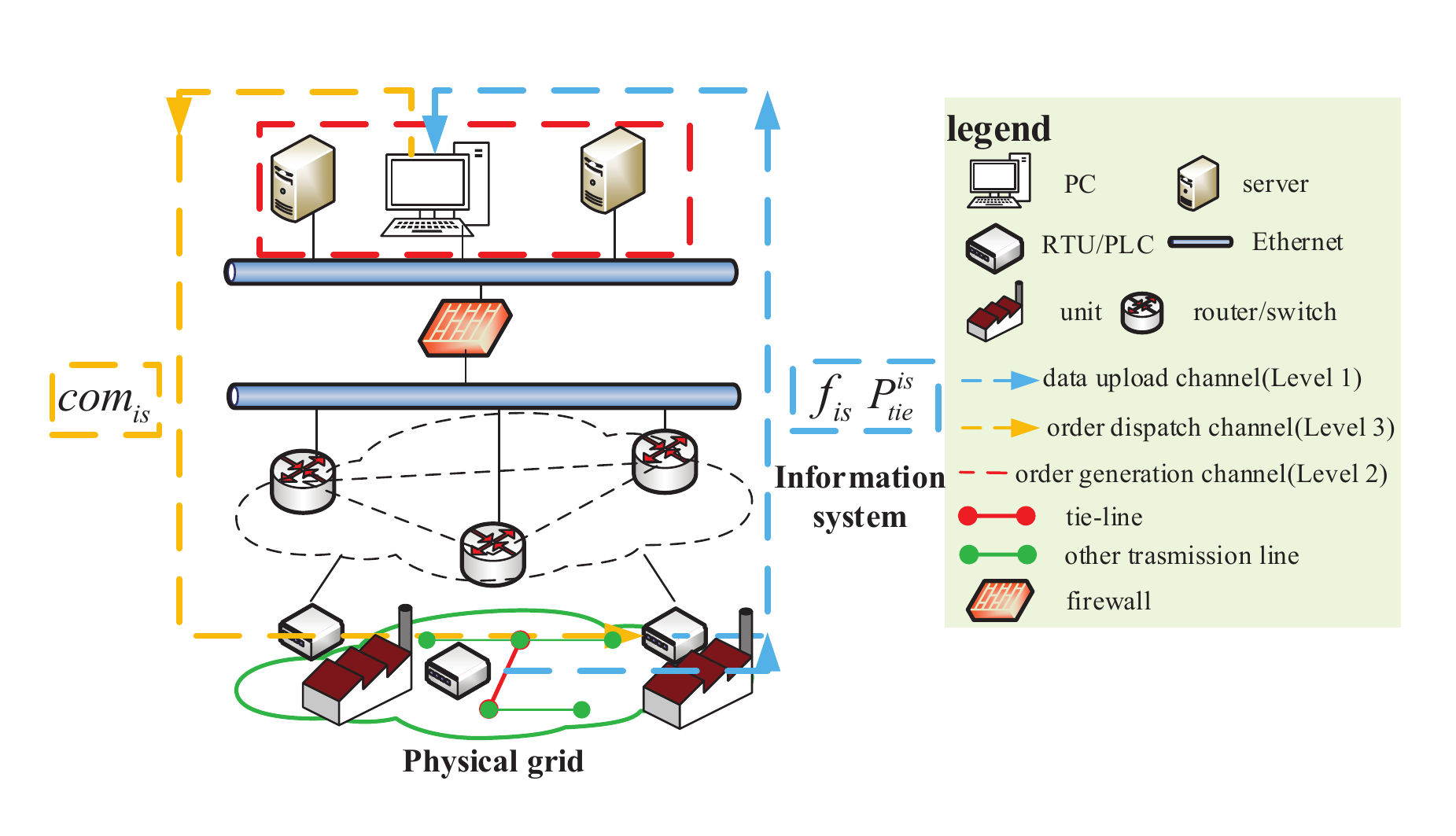}
\caption{Diagram of load frequency control}
\label{lfca1}
\end{figure}
The components in Level 2 contain intermediate variables/parameters during LFC order generation, which are shown in the red dashed box in Fig. \ref{lfcc1}
\begin{figure}[htbp]
\centering
\includegraphics[width=3 in]{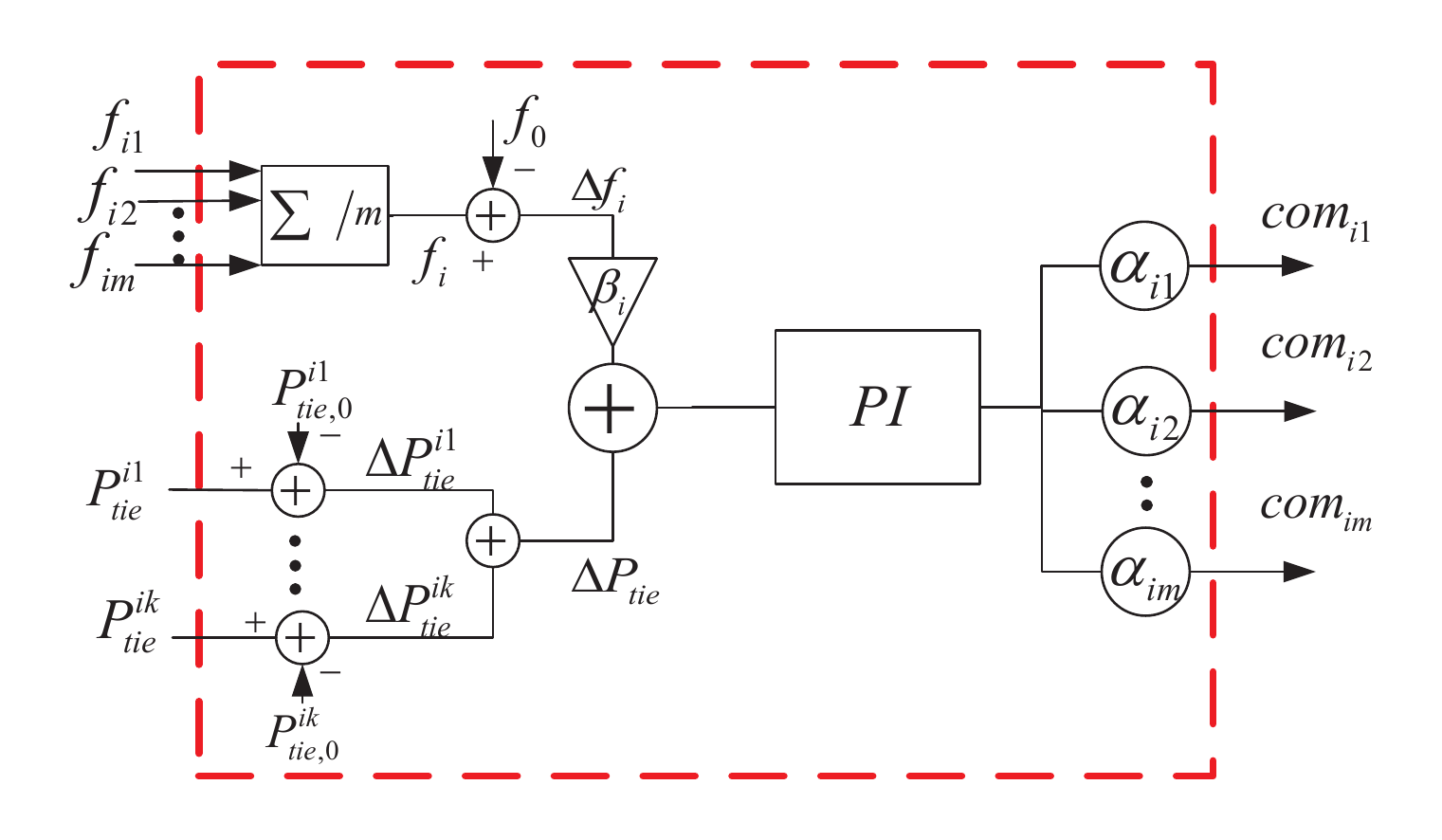}
\caption{Structure of LFC controller}
\label{lfcc1}
\end{figure}
In the remaining of this section, the quasi-steady-state response is studied by considering the attack template in (\ref{scale}) and (\ref{inject}).
\subsection{Cyber Attack on Level 1 components}
\label{lev1}
\subsubsection{$f_{is}$ oriented attack analysis}
\label{subsubf}
It can be learned that no matter the attacker adopts adopts (\ref{scale}) or (\ref{inject}) to attack frequency measurement of Generator $s$, there is a difference between the falsified center of inertia frequency $f_{ia}$ (perceived by Level 2) and the real $f_i$, which ultimately causes the miscalculation of the output feedback signal $ACE$:
\begin{equation}
AC{E_{if}} = AC{E_i} + e_i
\label{equa1}
\end{equation}
where $ACE_{if}$ is the falsified ACE; $ACE_i$ is the real ACE, ${e_i}{{ = {\beta _i}{H_{is}}\left( {k - 1} \right){f_i}} \mathord{\left/
 {\vphantom {{ = {\beta _i}{H_{is}}\left( {k - 1} \right){f_i}} {\sum\nolimits_j {{H_{ij}}} }}} \right.
 \kern-\nulldelimiterspace} {\sum\nolimits_j {{H_{ij}}} }}$ for scale attack and ${e_i}{{ = {\beta _i}{H_{is}}d} \mathord{\left/
 {\vphantom {{ = {\beta _i}{H_{is}}d} {\sum\nolimits_j {{H_{ij}}} }}} \right.
 \kern-\nulldelimiterspace} {\sum\nolimits_j {{H_{ij}}} }}$ for inject attack.

It follows that $ACE_{if}$ achieves asymptotical stability; $ACE_i+e_i=0$. It means that $\mathop {\lim }\limits_{t \to \infty } \Delta {f_i}\left( t \right) \to {{{e_i}} \mathord{\left/
 {\vphantom {{{e_i}} {\sum\nolimits_k {{\beta _k}} }}} \right.
 \kern-\nulldelimiterspace} {\sum\nolimits_k {{\beta _{ik}}} }}$. The power deviation of the $s^{th}$ tie-line interconnected with $i$ $\mathop {\lim }\limits_{t \to \infty } \Delta {P^{is}_{tie}}\left( t \right) \to  - {\beta _{is}}{{{e_i}} \mathord{\left/
 {\vphantom {{{e_i}} {\sum\nolimits_k {{\beta _{ik}}} }}} \right.
 \kern-\nulldelimiterspace} {\sum\nolimits_k {{\beta _{ik}}} }}$
\subsubsection{$P^{is}_{tie}$ oriented attack analysis}
Firstly, assume that the attacker adopts (\ref{scale}) to attack tie-line $s$, i.e., $P^{is}_{tie}=kP^{is}_{tie}$. It follows that the falsified interchange power deviation received by Area $i$ is $\Delta P^{is}_{tie}=kP^{is}_{tie}-P^{is}_{tie,0}$. However, this deviation is offset by $-\Delta P^{is}_{tie}=-(kP^{is}_{tie}-P^{is}_{tie,0})$, which is received by Area $s$. The quasi-steady-state response is not influenced. Similar conclusions can be made when  the attacker adopts (\ref{inject}).
\subsection{Cyber Attack on Level 2 Components}
\label{lev2}
As can be seen from Fig. \ref{lfcc1}, the main LFC components include $f_i$, $f_0$, $\Delta f_i$, $\beta_i$, $ P^{ij}_{tie,0}$, $\Delta P^{ij}_{tie}$, $\Delta P^{i}_{tie}$ and $\alpha_{ij}$; hence, in the remaining of this section, quasi-steady-state-response considering attack on these components is systematically analyzed.
\subsubsection{cyber attack on frequency-related components}
\label{sec1}
In this case, the attack occurs on $f_i$, $f_0$, $\Delta f_i$. Firstly, it can be learned attack on $f_i$ and $f_0$ will produce the opposite response. For brevity, only $f_i$ is considered; using (\ref{scale}) would lead to $\mathop {\lim }\limits_{t \to \infty } \Delta {f_i}\left( t \right) \to {{  {\beta _i}d} \mathord{\left/
 {\vphantom {{ - {\beta _i}d} {\sum\nolimits_k {{\beta _{ik}}} }}} \right.
 \kern-\nulldelimiterspace} {\sum\nolimits_k {{\beta _{ik}}} }}$, $\mathop {\lim }\limits_{t \to \infty } \Delta P_{tie}^{is}\left( t \right) \to {{{-\beta _{is}}{\beta _i}d} \mathord{\left/
 {\vphantom {{{\beta _{is}}{\beta _i}d} {\sum\nolimits_k {{\beta _{ik}}} }}} \right.
 \kern-\nulldelimiterspace} {\sum\nolimits_k {{\beta _{ik}}} }}$.

 As for $\Delta f_i$, it can be proved that system frequency and tie-line power still converge to nominal values unless $k$ in (\ref{scale}) satisfies
 \begin{equation}
{\beta _i}k + {\beta _{i1}} +  \cdots  + {\beta _{ik}} = 0
\label{equ_e1}
\end{equation}
Using (\ref{inject}) would lead to $\Delta f_i \to {{{\beta _i}d} \mathord{\left/
 {\vphantom {{{\beta _i}d} {\sum\nolimits_k {{\beta _{ik}}} }}} \right.
 \kern-\nulldelimiterspace} {\sum\nolimits_k {{\beta _{ik}}} }}$ ($\Delta {P^{is}_{tie}} \to  - {{{\beta _{is}}{\beta _i}d} \mathord{\left/
 {\vphantom {{{\beta _{ik}}{\beta _i}d} {\sum\nolimits_k {{\beta _{ik}}} }}} \right.
 \kern-\nulldelimiterspace} {\sum\nolimits_k {{\beta _{ik}}} }}$)

As with cyber attack in Section 1, the variables/parameters under the threat of attack are $f_0$ and $P^{is}_{tie,0}$.

Parameter modification of $f_0$ (using (\ref{inject})) will lead to $\mathop {\lim }\limits_{t \to \infty } \Delta {f_i}\left( t \right) \to {{ - {\beta _i}d} \mathord{\left/
 {\vphantom {{ - {\beta _i}d} {\sum\nolimits_k {{\beta _{ik}}} }}} \right.
 \kern-\nulldelimiterspace} {\sum\nolimits_k {{\beta _{ik}}} }}$, $\mathop {\lim }\limits_{t \to \infty } \Delta P_{tie}^{is}\left( t \right) \to {{{\beta _{is}}{\beta _i}d} \mathord{\left/
 {\vphantom {{{\beta _{is}}{\beta _i}d} {\sum\nolimits_k {{\beta _{ik}}} }}} \right.
 \kern-\nulldelimiterspace} {\sum\nolimits_k {{\beta _{ik}}} }}$. Manipulation of $P^{is}_{tie,0}$ using (\ref{inject}) will lead to $\mathop {\lim }\limits_{t \to \infty } \Delta {f_i}\left( t \right) \to {{ - d} \mathord{\left/
 {\vphantom {{ - d} {\sum\nolimits_k {{\beta _{ik}}} }}} \right.
 \kern-\nulldelimiterspace} {\sum\nolimits_k {{\beta _{ik}}} }}$, $\mathop {\lim }\limits_{t \to \infty } \Delta P_{tie}^{is}\left( t \right) \to {{{\beta _{is}}d} \mathord{\left/
 {\vphantom {{{\beta _{is}}d} {\sum\nolimits_k {{\beta _{ik}}} }}} \right.
 \kern-\nulldelimiterspace} {\sum\nolimits_k {{\beta _{ik}}} }}$.

\subsubsection{cyber attack on tie-line power-related components}
\label{sec2}
In this case, the attack occurs on $ P^{ij}_{tie,0}$, $\Delta P^{ij}_{tie}$, $\Delta P^{i}_{tie}$. Firstly,  manipulation of $P^{is}_{tie,0}$ using (\ref{inject}) will lead to $\mathop {\lim }\limits_{t \to \infty } \Delta {f_i}\left( t \right) \to {{ - d} \mathord{\left/
 {\vphantom {{ - d} {\sum\nolimits_k {{\beta _{ik}}} }}} \right.
 \kern-\nulldelimiterspace} {\sum\nolimits_k {{\beta _{ik}}} }}$, $\mathop {\lim }\limits_{t \to \infty } \Delta P_{tie}^{is}\left( t \right) \to {{{\beta _{is}}d} \mathord{\left/
 {\vphantom {{{\beta _{is}}d} {\sum\nolimits_k {{\beta _{ik}}} }}} \right.
 \kern-\nulldelimiterspace} {\sum\nolimits_k {{\beta _{ik}}} }}$.
As for scale attack on $\Delta P^{is}_{tie}$, it can be proved that system frequency and tie-line power still converge to nominal values unless $k$ in (\ref{scale}) satisfies
\begin{equation}
{\beta _{is}}(k-1) + {\beta _{i}}+{\beta _{i1}} +  \cdots  + {\beta _{ik}} = 0
\label{equ_e2}
\end{equation}
Similarly, in scale attack on $\Delta P^{i}_{tie}$  $k$ in (\ref{scale}) should satisfy
\begin{equation}
{{{\beta _i}} \mathord{\left/
 {\vphantom {{{\beta _i}} k}} \right.
 \kern-\nulldelimiterspace} k} + {\beta _{i1}} + {\beta _{i2}} \cdots  + {\beta _{ik}} = 0
 \label{equ_e3}
\end{equation}
otherwise, system frequency and tie-line power still converge to nominal values.

\subsection{Cyber Attack on Level 3}
\label{sub_cyb3}
In this scenario, $com_{ij}$ $1\leq j\leq m$ are under threat.When the total LFC order $com_{i}$ is generated through PI controller, each LFC participating generator $j$ receives power adjustment command with $com_{ij}=\alpha_{ij}com_{i}$. When the attacker changes $\alpha_{ij}$ to $\alpha_{ij,f}$, which means the falsified command is $com_{ij,f}=\alpha_{ij,f}com_{i}$. In this case, only the reference power adjustment $u_{i}$ is substituted by the sum of the real $u_i$ and the error $u_e$: ${u_{if}} = {u_i} + {u_e}$, and asymptotical stability of $ACE_i$ still holds. Moreover, since the output feedback signal is not compromised, the long-term stability is not disturbed. This extra command $\Delta co{m_{ij}} = co{m_{ij,f}} - co{m_{ij}}$ can be regarded as the feedforward compensation signal, and the steady-state $u_{if}$ remains the same as $u_i$. The only differences between $u_i$ and $u_{if}$ are transient dynamics in inception phase.

\begin{figure}[htbp]
\centering
\includegraphics[width=3 in]{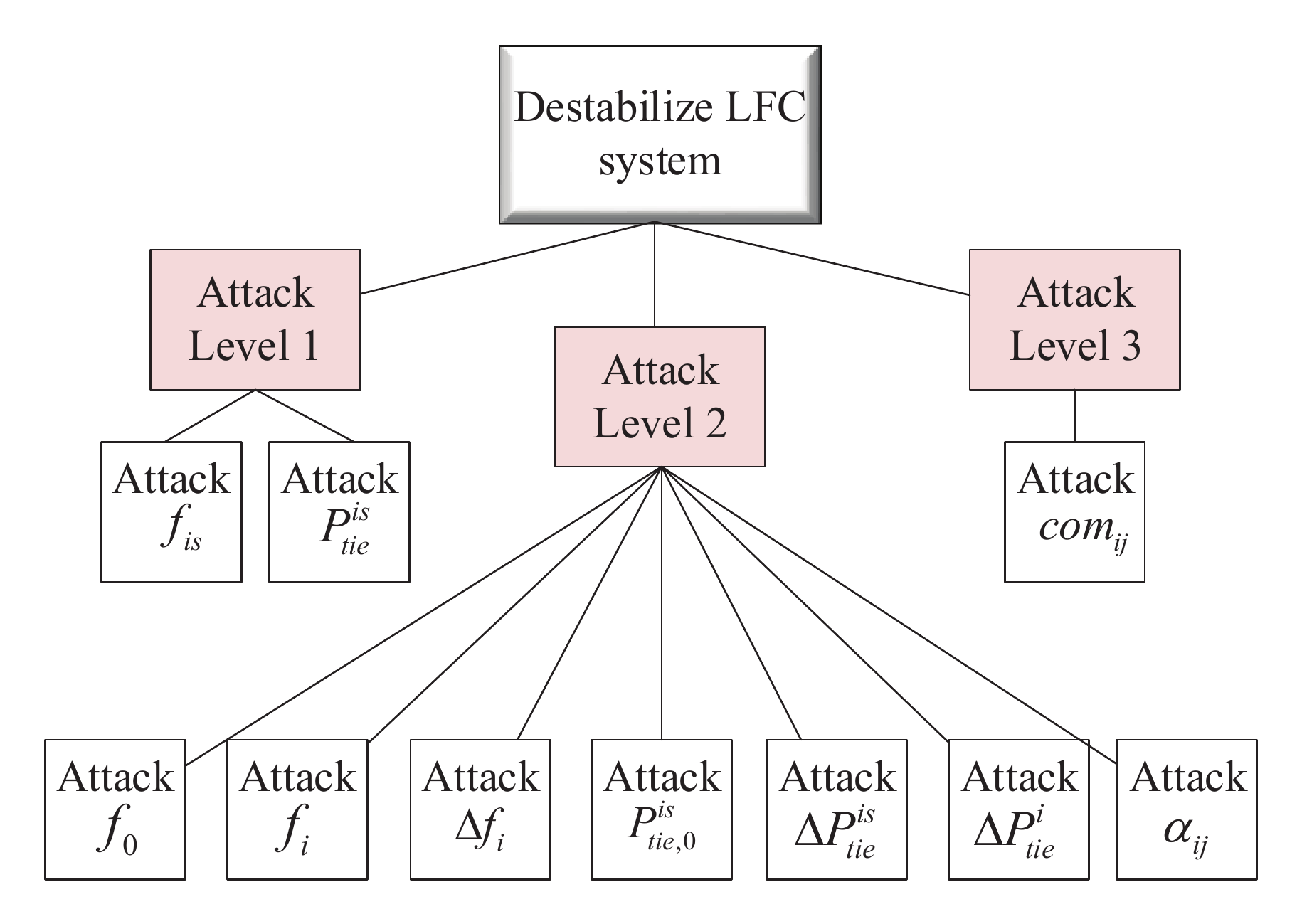}
\caption{Attack tree of LFC control system}
\label{attree}
\end{figure}
\section{Vulnerability Assessment of Cyber Security of Load Frequency Control}
\label{sec_vul}
In Section \ref{sec_att}, integrity attacks on different components of LFC control system are analyzed with respect to the influence on LFC performance. In this section, the influence is further quantified and ranked to better understand the criticality of different attack scenarios.
\subsection{Influence on Load Frequency Control Performance}
Based upon Section \ref{sec_att}, the overall laws of attack influence on LFC through compromising different components are summarized as follows:
\begin{itemize}
\item $P^{is}_{tie}$ oriented attacks on Level 1, $\beta_i$ oriented attacks on Level 2 and $com_{ij}$ oriented attacks on Level 3 and can guarantee (\ref{equ_ace}) holds.
\item  Scale attack (\ref{scale}) on $\Delta f_i$, $\Delta P^{is}_{tie}$ and $\Delta P^{i}_{tie}$ has no influence on quasi-steady-state response, unless (\ref{equ_e1}), (\ref{equ_e2}) and (\ref{equ_e3}) hold. Inject attack on $\Delta f_i$, $\Delta P^{is}_{tie}$ or $\Delta P^i_{tie}$ would cause unexpected deviations.
\item Manipulation of $f_{ij}$, $f_0$, $f_i$, $P^{is}_{tie}$ and $P^{is}_{tie,0}$ would cause unexpected deviations.
\end{itemize}
In order to quantify the influence, i.e., assign the value of the leaf nodes in Fig. \ref{attree}, the sensitivity-based method is adopted:
\begin{equation}
c_i = \frac{{{\beta _i}\left| {\Delta {f_i}} \right| + \sum\nolimits_{i = 1}^k {\left| {\Delta P_{tie}^{ik}} \right|} }}{d_i}
\label{ci0}
\end{equation}
where $d_i$ represents the integrity attack on variable/parameter $i$ in LFC; $c_i$ represents the deviation criterion under $d_i$. 

\subsection{Influence on Generation of LFC-Participating Units}
As for attacks which can cause deviations of frequency and tie-line power, e.g., $f_{ij}$ and $f_0$ oriented attacks, by manipulation of variables/parameters in LFC, ACE is falsified into :
\begin{equation}
AC{E_{if}} = AC{E_i} + er{r_i}
\label{equ_acer0}
\end{equation}
where $err_i$ is the miscalculation due to cyber attacks, e.g., $err_i$ can be $e_i$ in (\ref{equa1}). When $er{r_i} > 0$, the real ACE satisfies $AC{E_i} < 0$, which means the LFC generating units decrease generation $u_i <0$, vice versa. When considering the normal load variation $\Delta p_{di}$, the real $ACE_i$ can be represented by:
\begin{equation}
AC{E_i} = AC{E_{pi}} - er{r_i}
\label{equ_acer}
\end{equation}
where $err_i$ is the same as that in (\ref{equ_acer0}). $AC{E_{pi}}$ is the area control area induced by the real load variation $\Delta p_{di}$ ($AC{E_{pi}} =  - \Delta {p_{di}}$ when ignoring transmission loss). If $ACE_i$ in (\ref{equ_acer}) satisfies $AC{E_i} < 0$, then $u_i <0$, vice versa.

As for attacks which have no influence on long-term stability, e.g., scale attack on $\Delta f_i$, suppose that load variation $\Delta p_{di}$ occurs after the attack, which induces a frequency drop $\Delta f_i$. The falsified $\beta_{if}\Delta f_i$ will cause misadjustment of generation. Nevertheless, $u_i$ in the steady-state remains the same as before. Similarly, it can be proved that falsification of $com_{ij}$ leads to the same $u_i$ in the steady state.

By replacing the attack goal in Fig. \ref{attree} by generation disruption, a similar attack tree can be constructed. Denote value of leaf nodes by generation disruption per unit falsification:
\begin{equation}
{c_i} = \frac{\left|{\Delta {u_i}}\right|}{{{d_i}}}
\label{ci1}
\end{equation}
quantification of influence on generation by attacking different variables/parameters can be achieved.
\begin{theo}
\label{the1}
$c_i$ with respect to the same leaf node in two attack trees are equal when neglecting the transmission loss and load variation induced by the attack .
\end{theo}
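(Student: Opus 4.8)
The plan is to show that, for every leaf node, both leaf‑value expressions (\ref{ci0}) and (\ref{ci1}) collapse to the single scalar $\left|err_i\right|/d_i$, where $err_i$ is the ACE miscalculation induced by that particular attack of magnitude $d_i$, in the sense of (\ref{equa1}) and (\ref{equ_acer0}). Since the two attack trees are built over the same attack scenarios, and hence share the same denominators $d_i$, equality of the leaf values follows once both numerators are identified with $\left|err_i\right|$.

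First I would clear away the trivial leaf nodes. For every attack that Section~\ref{sec_att} shows to preserve (\ref{equ_ace}) --- $P^{is}_{tie}$ on Level~1, $\beta_i$ on Level~2, $com_{ij}$ on Level~3 --- and for the scale attacks on $\Delta f_i$, $\Delta P^{is}_{tie}$, $\Delta P^{i}_{tie}$ outside the degenerate tunings (\ref{equ_e1})--(\ref{equ_e3}), one has $\Delta f_i\to 0$, $\Delta P^{is}_{tie}\to 0$, and, by the arguments of Section~\ref{sub_cyb3} and of the generation‑disruption discussion, $\Delta u_i\to 0$ as well; hence both (\ref{ci0}) and (\ref{ci1}) return $c_i=0$ and trivially agree (equivalently, $err_i=0$ for these scenarios).

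For the remaining leaf nodes --- manipulation of $f_{ij}$, $f_0$, $f_i$, $P^{is}_{tie,0}$, and the injection attacks on $\Delta f_i$, $\Delta P^{is}_{tie}$, $\Delta P^{i}_{tie}$ --- I would argue in three steps. (i) From $ACE_{if}=ACE_i+err_i$ and the asymptotic stability of the falsified ACE, $ACE_i\to -err_i$ in the quasi‑steady state. (ii) In that state the interconnection settles to one common frequency deviation $\Delta f_i$, and every neighbouring, uncompromised area drives its own ACE to zero, which pins $\Delta P^{is}_{tie}=\pm\beta_{is}\Delta f_i$; substituting into $ACE_i=\sum_{s}\Delta P^{is}_{tie}+\beta_i\Delta f_i=-err_i$ gives $\Delta f_i=\mp err_i/\Sigma$ and $\Delta P^{is}_{tie}=\mp\beta_{is}err_i/\Sigma$ with $\Sigma=\beta_i+\sum_{s=1}^{k}\beta_{is}$ --- exactly the closed forms already tabulated throughout Section~\ref{sec_att}. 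Because all bias coefficients are positive, every such deviation is a positive multiple of $\left|err_i\right|/\Sigma$, so
\[
\beta_i\left|\Delta f_i\right|+\sum\nolimits_{s=1}^{k}\left|\Delta P^{is}_{tie}\right|=\frac{\left|err_i\right|}{\Sigma}\left(\beta_i+\sum\nolimits_{s=1}^{k}\beta_{is}\right)=\left|err_i\right|,
\]
that is, the numerator of (\ref{ci0}) equals $\left|err_i\right|$.

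(iii) For (\ref{ci1}) I would write the steady‑state active‑power balance of Area~$i$ with no load variation and no transmission loss: the net LFC generation adjustment $\Delta u_i$ must equal the tie‑line export plus the frequency‑response term, which --- under the standard tie‑line‑bias setting (frequency bias equal to the area's natural response characteristic) --- is precisely $\sum_{s}\Delta P^{is}_{tie}+\beta_i\Delta f_i=ACE_i=-err_i$; hence $\left|\Delta u_i\right|=\left|err_i\right|$, the numerator of (\ref{ci1}). Combining (i)--(iii), both criteria equal $\left|err_i\right|/d_i$, which proves the claim. I expect step~(iii) to be the real obstacle: cleanly equating $\Delta u_i$ with $ACE_i$ relies on the frequency‑bias convention and on the ``no load variation, no loss'' hypothesis --- reinstating the load term $ACE_{pi}=-\Delta p_{di}$ of (\ref{equ_acer}) would destroy the equality --- so this is exactly where the stated hypotheses must be used. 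A secondary, purely bookkeeping nuisance is tracking the signs of $\Delta f_i$ and $\Delta P^{is}_{tie}$ across the different attack types so that the sum of absolute values telescopes to $\left|err_i\right|$ rather than to something strictly smaller; positivity of the $\beta$'s settles this once and for all.
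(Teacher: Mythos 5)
Your proposal is correct and follows essentially the same route as the paper's proof: identify the numerator of (\ref{ci1}) with $\left|e_i\right|$ via the steady-state relation $\Delta u_i=-e_i$, and show the numerator of (\ref{ci0}) telescopes to $\left|e_i\right|$ by substituting the closed forms $\Delta f_i\to e_i/\Sigma$, $\Delta P^{is}_{tie}\to-\beta_{is}e_i/\Sigma$ and using $\Sigma=\beta_i+\sum_s\beta_{is}$. You merely add an explicit power-balance justification of $\Delta u_i=-e_i$ (which the paper asserts) and an explicit treatment of the zero-valued leaf nodes.
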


\begin{proof}
Suppose under specific integrity attack $d_i$, the error of ACE induced is $e_i$. Then, the steady-state generation disruption $u_i$ is $\Delta u_i=-e_i$ when neglecting the transmission loss and load variation induced, $\left| {\Delta {u_i}} \right| = \left| {{e_i}} \right|$. On the other hand, frequency-tie-line-power deviation can be written by:
\[\begin{array}{l}
{\beta _i}\left| {\Delta {f_i}} \right| + \sum\nolimits_{i = 1}^k {\left| {\Delta P_{tie}^{ik}} \right|} \\
{\kern 1pt} {\kern 1pt} {\kern 1pt} {\kern 1pt} {\kern 1pt} {\kern 1pt} {\kern 1pt} {\kern 1pt} {\kern 1pt} {\kern 1pt} {\kern 1pt} {\kern 1pt} {\kern 1pt} {\kern 1pt} {\kern 1pt} {\kern 1pt} {\kern 1pt} {\kern 1pt}  = {\beta _i}\left| {\frac{{{e_i}}}{{\sum \beta  }}} \right| + \sum\nolimits_{i = 1}^k {\left| {\frac{{{\beta _{ik}}{e_i}}}{{\sum \beta  }}} \right|} \\
{\kern 1pt} {\kern 1pt} {\kern 1pt} {\kern 1pt} {\kern 1pt} {\kern 1pt} {\kern 1pt} {\kern 1pt} {\kern 1pt} {\kern 1pt} {\kern 1pt} {\kern 1pt} {\kern 1pt} {\kern 1pt} {\kern 1pt} {\kern 1pt} {\kern 1pt}  = \left| {{e_i}} \right|\left( {\frac{{{\beta _i}}}{{\sum \beta  }} + \sum\nolimits_{i = 1}^k {\frac{{{\beta _{ik}}}}{{\sum \beta  }}} } \right)\\
{\kern 1pt} {\kern 1pt} {\kern 1pt} {\kern 1pt} {\kern 1pt} {\kern 1pt} {\kern 1pt} {\kern 1pt} {\kern 1pt} {\kern 1pt} {\kern 1pt} {\kern 1pt} {\kern 1pt} {\kern 1pt} {\kern 1pt} {\kern 1pt} {\kern 1pt}  = \left| {{e_i}} \right|=\left| {\Delta {u_i}} \right|
\end{array}\]

\end{proof}
Based on Theorem \ref{the1}, it is learned that the two vulnerability indices, disruption of LFC performance and generation, have the same scores in respect to certain integrity attacks in Section \ref{sec_att}, which means that performance and generation disruption have a strong positive correlation. And the two indices are interchangeable when assessing vulnerability of LFC.
\section{Mitigation for Vulnerability of Cyber Security of Load Frequency Control}
\label{sec_mit}
Mitigation for vulnerability can be summarized as two stages of defense. Stage 1 is the prevention stage where the operator tries to prevent the attacker from infiltrating into the system; stage 2 is the detection stage where the operator should single out the compromised signal once the system is infiltrated, which lays the foundation for mitigation through system reconfiguration.

With the aid of the designed attack tree model in Section \ref{sec_vul}, preventive resource can be allocated based on the criticality of each attack scenario (indicated by the leaf node in the attack tree). Preventive resource is used to invalidate potential attacks before they ever infiltrate the system. For example, the operator can set up multiple sensors for the same variable based upon the presumption that the attacker cannot compromise all of them, which is also known as measurement redundancy method.

Attack detection and identification is capable of discerning attack activity when preventive measures fail to resist it, which is essential to grasping behavioral pattern of the attacker besides information provision for subsequent mitigatory measure design. Tough statistical methods can effectively solve the differentiation problem among normal load variation and cyber attack scenarios\cite{law2015security,sridhar2014}, the type of attack scenarios cannot be identified. As for cyber attack on LFC studied in this paper, the type represents the category of the attack signal. In order to identify to which category a new observation of a scenario belongs, classification-based method is adopted to solve this typical multi-class anomaly detection techniques.

Classification-based techniques operate in a two-phase fashion, where the training phase learns a classifier using labelled training data and the test phase identifies to which class a test instance belongs to by using the trained classifier. Apart from the classification algorithm (e.g., neural networks (NNs)-based and Bayesian networks-based ones), the key element of determining the classifier quality is input data, which determines the upper limit of learning algorithm.

\subsection{Input Data for Classifier}

Instead of relying on static responses (quasi-steady-state response) for data generation, the input data should be tightly connected with the LFC dynamics, which can generate dynamic system response containing abundant information about properties of scenarios. Hence, dynamic responses of area control error (ACE), which considers both frequency and tie-line power variation, are used as the input data. Every ACE instance is a time series, with the expansion of time length and increase of sampling rate, the amount of data points grows up. When considering hundreds even thousands of data instances, the total amount of data becomes even much greater and big data forms.

There exists much redundant and classification-irrelevant information in data instances, and feature selection or extraction should be implemented to obtain relevant features, thus reducing classifier complexity and enhancing generalization performance. Though deep learning techniques can self-learn multi-level representations and features for classification, it usually demands a huge quantity of data, which is impossible in reality since cyber attacks are a small probability events. In order to deal with the feature generation with limited data instances, discrete Fourier transform (DFT) is used to extract the low-dimensional components (DFT coefficients) in frequency domain. They preserve obvious differences of different classes and significantly reduce data complexity, which are quite beneficial to data training (testing) in an efficient fashion.

ACE responses under normal and compromised conditions are simulated and the results are shown in Fig. \ref{simu1}. As can be seen, compared the variation of system frequency responses (in Fig. \ref{a0}), the variation of ACEs (in Fig. \ref{a1}) is much more explicit among different conditions. Furthermore, the dimension of inputs is significantly reduced by implementing DFT on ACEs without deteriorating the explicit variation (in Fig. \ref{a2}).
\begin{figure*}
\begin{subfigure}{0.3\textwidth}
   \includegraphics[width=\linewidth]{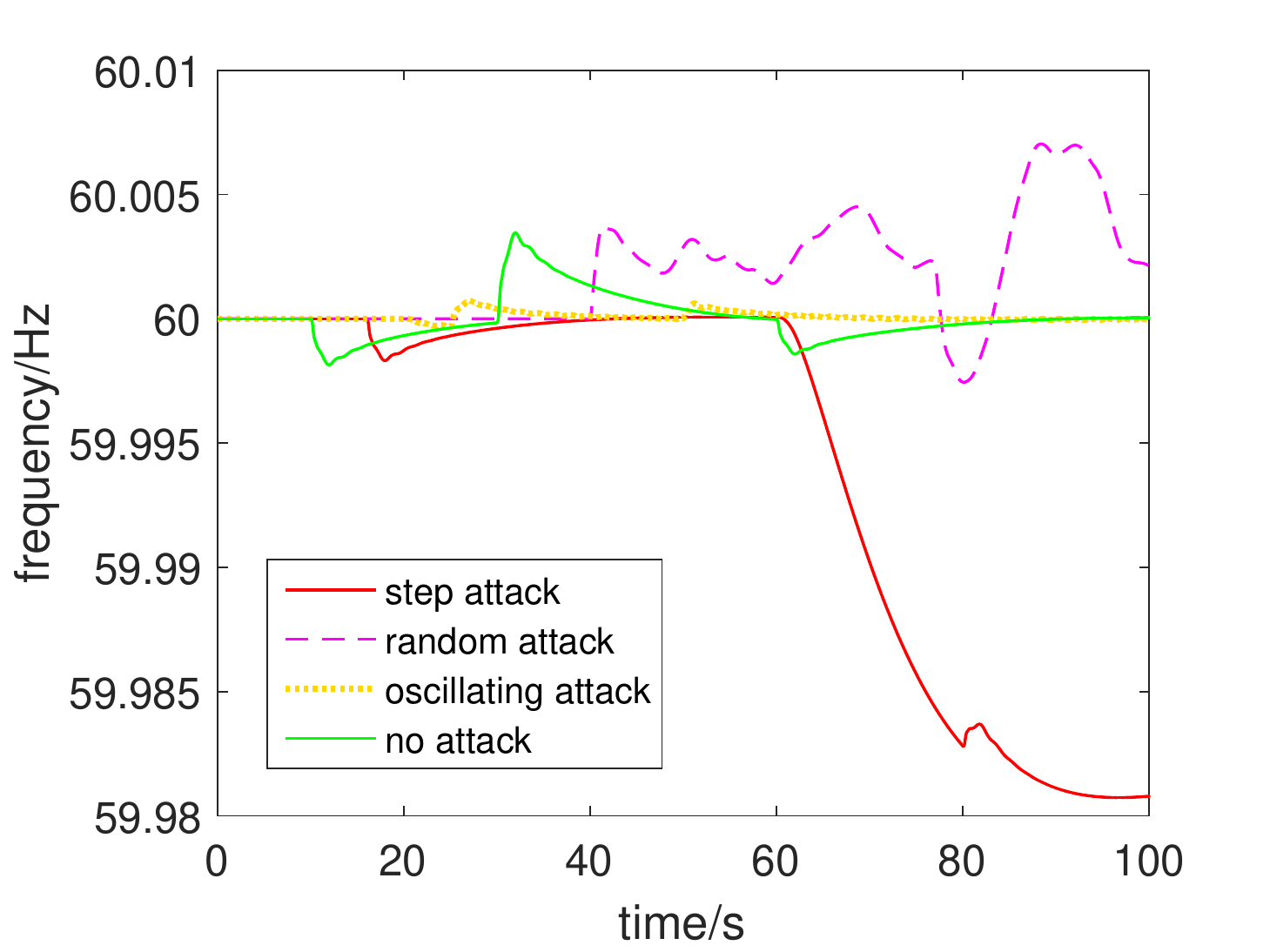}
   \caption{frequency under different conditions} \label{a0}
\end{subfigure}
\hspace*{\fill}
\begin{subfigure}{0.3\textwidth}
   \includegraphics[width=\linewidth]{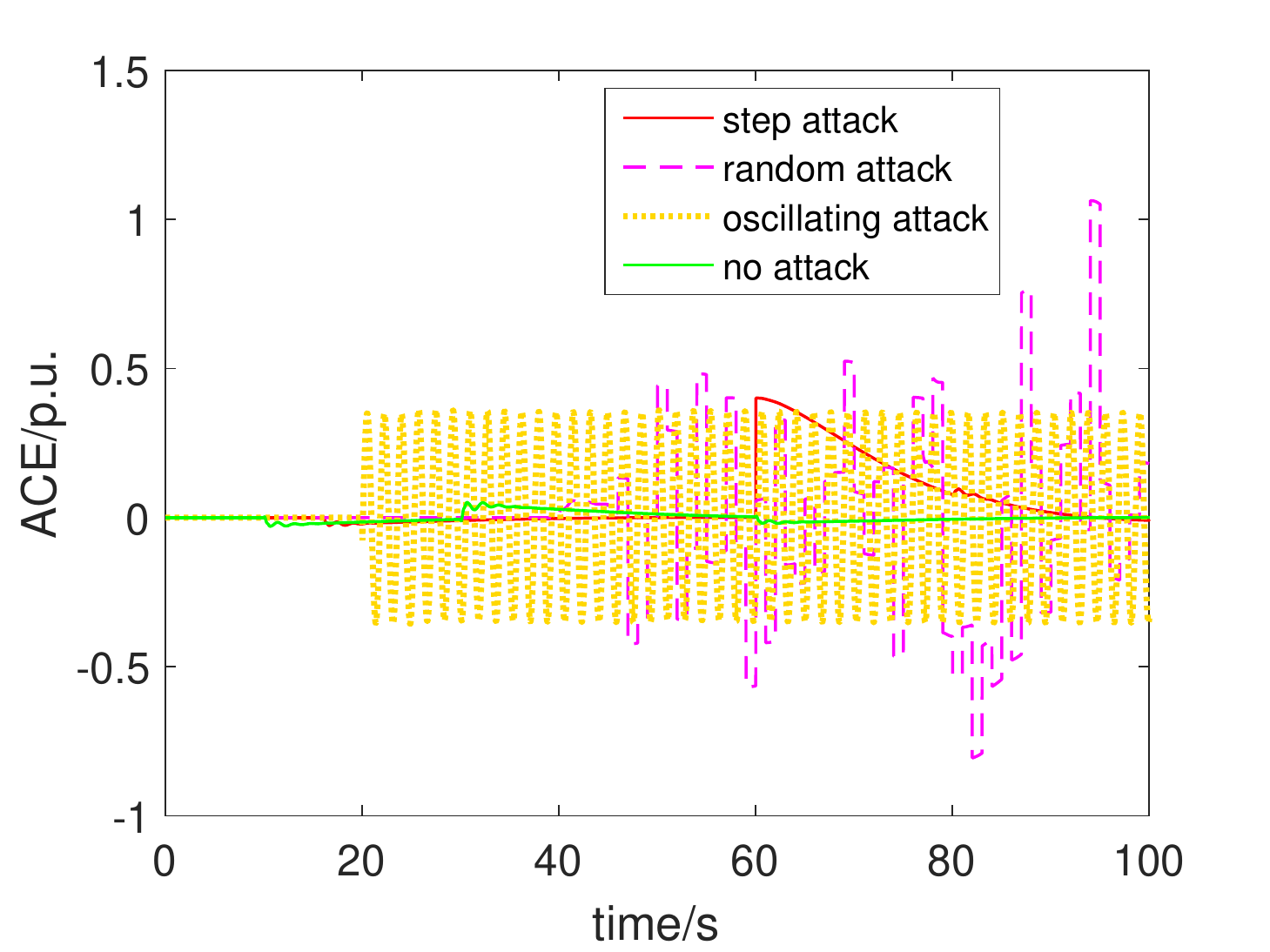}
   \caption{ACE under different conditions} \label{a1}
\end{subfigure}
\hspace*{\fill}
\begin{subfigure}{0.3\textwidth}
   \includegraphics[width=\linewidth]{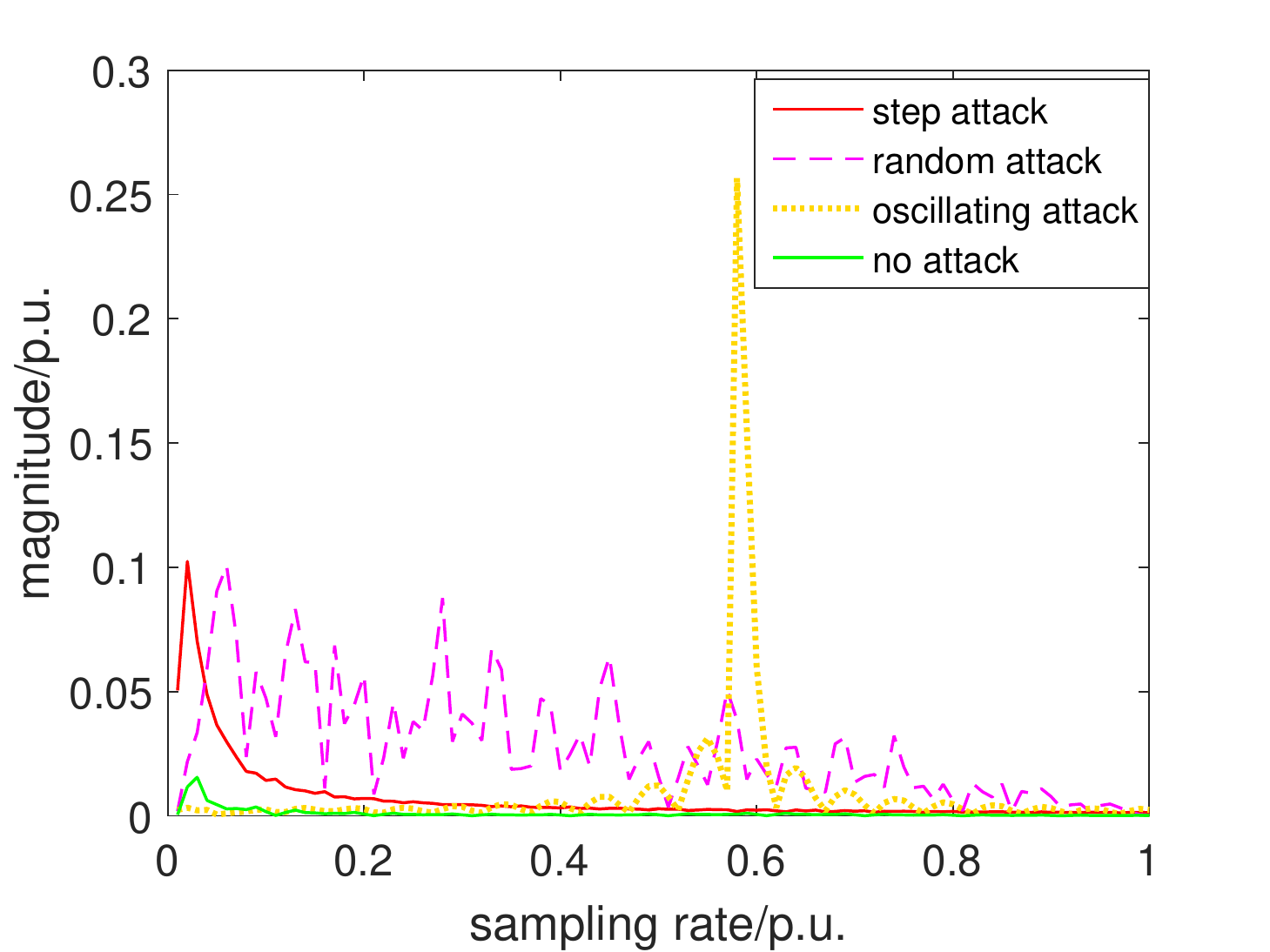}
   \caption{spectral components under different conditions} \label{a2}
\end{subfigure}
\hspace*{\fill}

\caption{LFC system responses under different conditions}
\label{simu1}
\end{figure*}

\subsection{Classification Algorithm}
\label{subsec_cla}
\subsubsection{Neural Networks}
In this paper, the feedforward neural network multilayer perceptron (MLP) is used for classification\cite{windeatt2011embedded}. The structure is as shown in Fig. \ref{mlp}.
\begin{figure}[htbp]
\centering
\includegraphics[width=3 in]{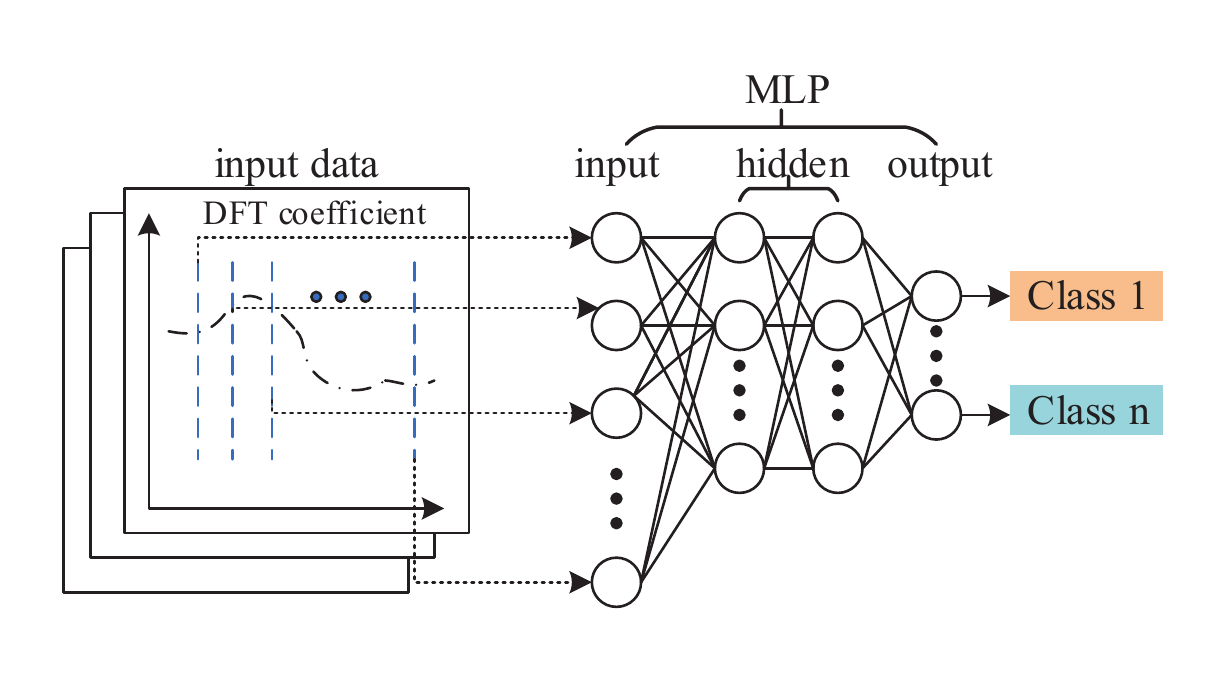}
\caption{Attack tree of LFC control system}
\label{mlp}
\end{figure}

Mathematically, the goal of MLP classifier is derived from
\begin{equation}
\min E\left( \omega  \right) = \frac{1}{2}\sum\limits_{i = 1}^p {{{\left\| {y\left( {{x^i},\omega } \right) - {d^i}} \right\|}^2}}
\end{equation}
where $\omega$ represent the weights; $E$ is the error term; is the $i^{th}$ input data, $y$ is the output of classifier, $d^i$ is the $i^{th}$ desired output. Through gradient descent approach, the optimal can be computed to minimize the prediction errors.

Besides MLP, the autoencoder is adopted for data denosing/dimensionality reduction. Dataset of DFT coefficients is firstly fed into the autoencoder\cite{liou2014autoencoder,chen2018autoencoder}, the output of which is then used as the input of MLP classifier. Moreover, in case of data indistinguishable-ness under normal load variation and some attack scenarios, a threshold-based module is developed. The architecture of the whole composite MLP classifier is shown in Fig. \ref{sche}.
\begin{figure}
\includegraphics[width=3.5in]{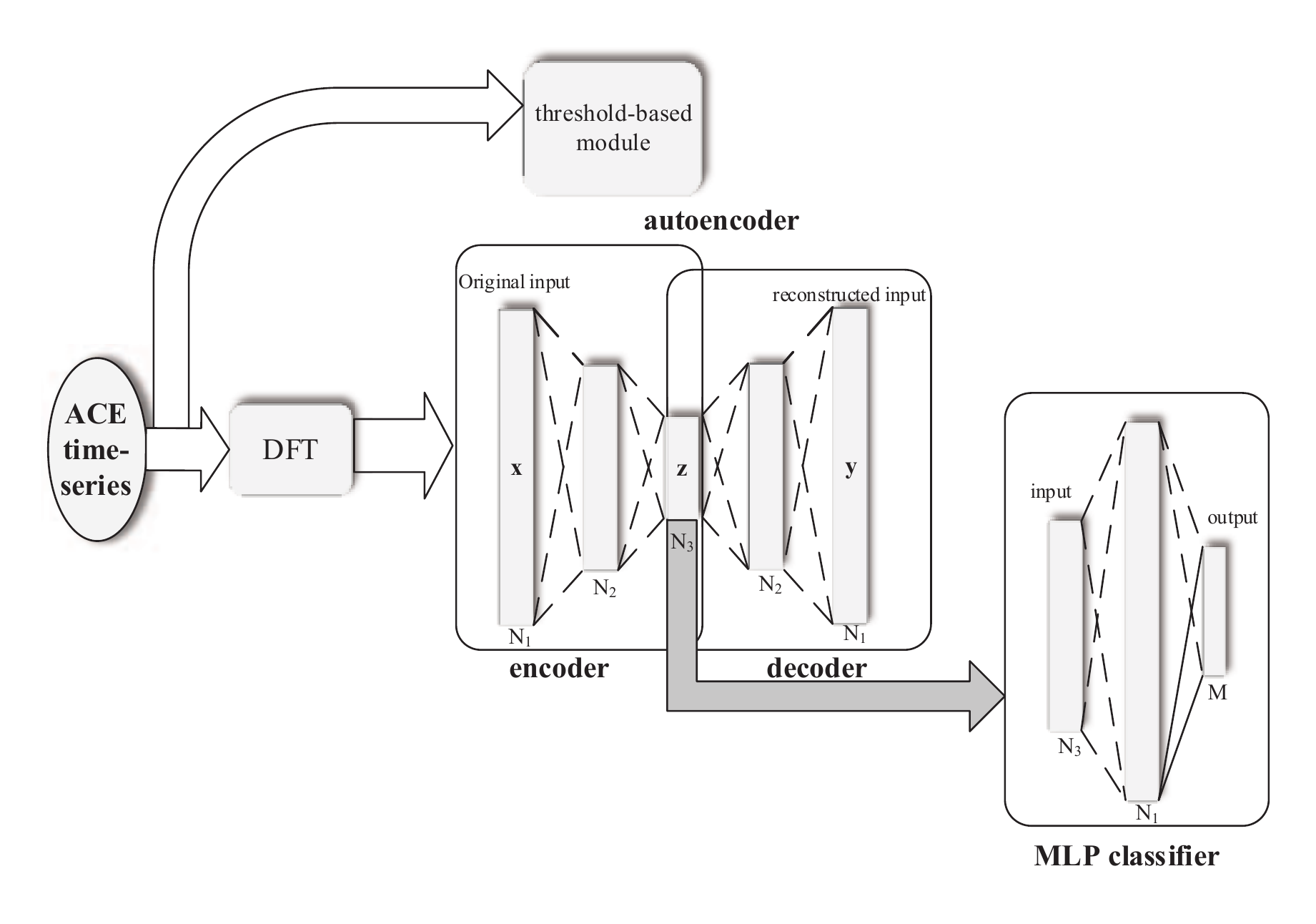}
\caption{Schematic diagram of composite NN-based attack detector} \label{sche}
\end{figure}
Details of the threshold-based module are given in Algorithm 1.
\begin{table*}
\centering
\begin{tabular}{l}
\hlinew{1pt}
{\bf{Algorithm 1}} Threshold-based detection module in Fig. \ref{sche}\\
\hline
1: {\bf{Sample}} $\Delta f_o$ (the sampling rate is $f_s$, duration of inspection is $T_s$, the length of the signals is $L_s=T_sf_s$)\\
2: {\bf{Set}} sliding window width $sw_1=l_1$        \\
3: {\bf{For}} $i=1: L_s-l_1+1$         \\
4:  $y(i)=\Delta f_o(i:i+l_1-1)$\\
5:  $x(i)=exp(var(y(i)))$ \\
6:  {\bf{End}} \\
7: {\bf{If}} $x(i)>=\varsigma_f$\\
8: {\bf{Then}} $\Delta f_o$ is compromised \\
9: {\bf{End}}\\

\hline
\end{tabular}
\end{table*}
\subsubsection{Support Vector Machine}
The goal of support vector machine classification (for muli-classification) is to divide a separable dataset into subsets, it can be defined as an optimization problem\cite{hsu2002comparison}
\begin{equation}
\begin{array}{l}
\mathop {\min }\limits_{\omega ,b,\xi } \sum\limits_{m = 1}^k {{{\left\| {{\omega _m}} \right\|}^2}}  + C\sum\limits_{i = 1}^l {\sum\limits_{m \ne {y_i}} {\xi _i^m} } \\
s.t.\omega _{{y_i}}^T\phi \left( {{x_i}} \right) + {b_{{y_i}}} \ge \omega _m^T\phi \left( {{x_i}} \right) + {b_m} + 2 - \xi _i^m\\
\xi _i^m \ge 0
\end{array}
\end{equation}
where ${\xi _i^m}$ represents slack variables; $\omega_m$ represents the weights; $C$ is the regulation parameter which reflects satisfaction degree of the constraints; $x_i$ is the input of an instance; $\phi$ is the basis function which mapping to a high dimensional space for better separability. It constructs two-class rules where $\omega _m^T\phi \left( {{x_i}} \right) + {b_m}$ separates training instances of class from other classes.
\subsubsection{Bayesian Network}
Naive Bayesian classifier \cite{koc2012network} predicts that a data instance $X$ belongs to the class with the highest a posteriori probability conditioned on $X$, which means $X$ belongs to class $C_i$ if and only if
\begin{equation}
P\left( {{{{C_i}} \mathord{\left/
 {\vphantom {{{C_i}} X}} \right.
 \kern-\nulldelimiterspace} X}} \right) > P\left( {{{{C_j}} \mathord{\left/
 {\vphantom {{{C_j}} X}} \right.
 \kern-\nulldelimiterspace} X}} \right),1 \le j \le m,j \ne i
 \end{equation}
 In order to calculate the maximum, $P\left( {{{{C_i}} \mathord{\left/
 {\vphantom {{{C_i}} X}} \right.
 \kern-\nulldelimiterspace} X}} \right)$ should be calculated first, Based on Baye's theorem, it follows that
\begin{equation}
P\left( {{{{C_i}} \mathord{\left/
 {\vphantom {{{C_i}} X}} \right.
 \kern-\nulldelimiterspace} X}} \right) = \frac{{P\left( {{X \mathord{\left/
 {\vphantom {X {{C_i}}}} \right.
 \kern-\nulldelimiterspace} {{C_i}}}} \right)P\left( {{C_i}} \right)}}{{P\left( X \right)}}
  \end{equation}
 where ${P\left( {{X}} \right)}$ is equal for all classes; ${P\left( {{C_i}} \right)}$ can be computed by counting the frequency of instance belonging to in all data; based on the naive assumption of conditional independence of each attribute (data point in DFT coefficients), one has
\begin{equation}
P\left( {{X \mathord{\left/
 {\vphantom {X {{C_i}}}} \right.
 \kern-\nulldelimiterspace} {{C_i}}}} \right) \approx \prod\limits_{k = 1}^n {P\left( {{{{x_k}} \mathord{\left/
 {\vphantom {{{x_k}} {{C_i}}}} \right.
 \kern-\nulldelimiterspace} {{C_i}}}} \right)}
   \end{equation}

\section{Case Studies}
\label{sec_cas}
In this section, Kundur's 4-unit-13-bus system is used for vulnerability assessment and subsequent attack scenario identification. The single-line diagram is as shown in Fig. \ref{lfca}.
\begin{figure}[htbp]
\centering
\includegraphics[width=3.5 in]{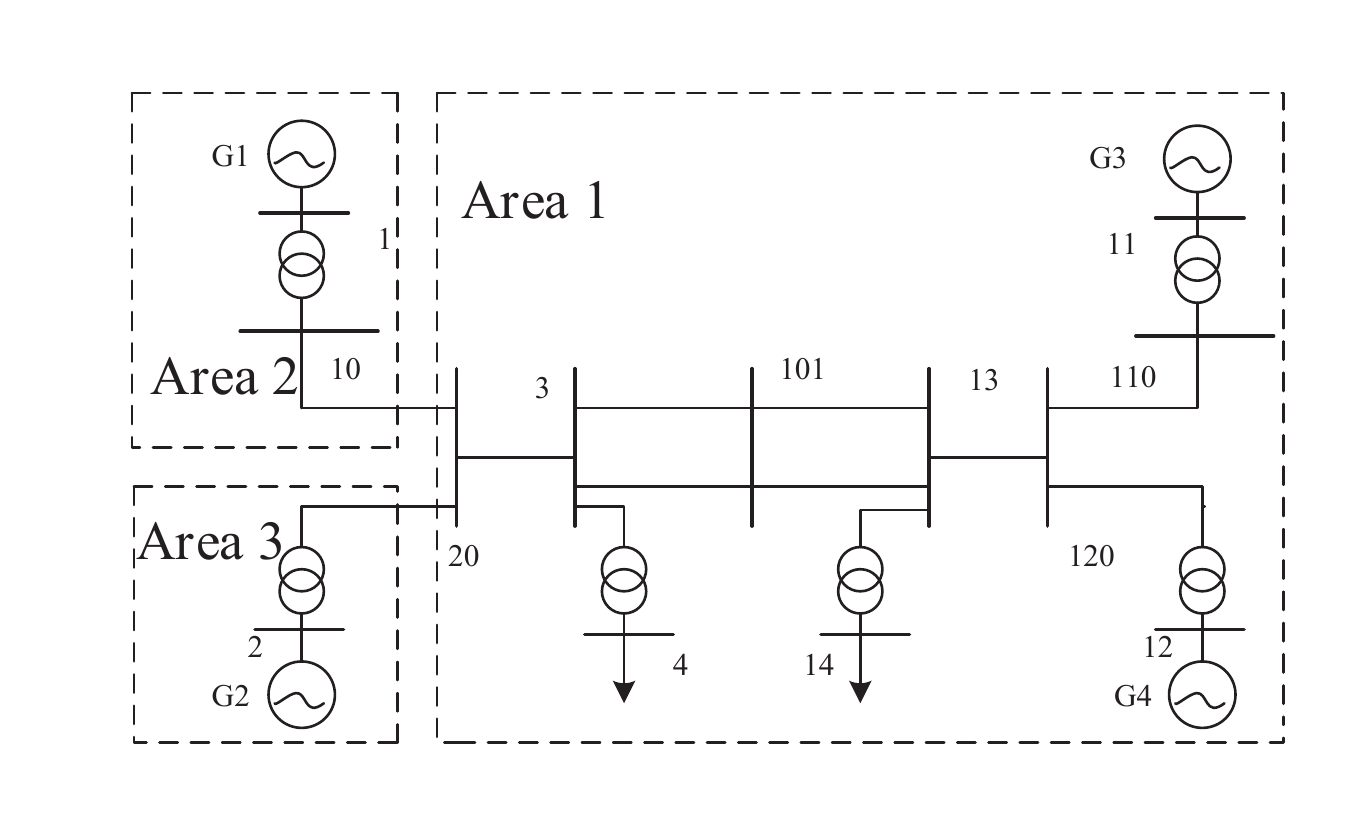}
\caption{Single line diagram of IEEE 13-bus based tree-area}
\label{three}
\end{figure}

\subsection{Vulnerability Assessment for Cyber Security of LFC}
\label{subsec_att}
The system is divided into three control areas where Area 1 is the subject for study. The attack model (leaf node in attack tree) is constructed by simulating each scenario addressed in Section. 25 total attack scenarios are generated and numbered. Descriptions of scenarios corresponding to specific numbers are given in Appendix \ref{appen1}.

Sensitivity index in (\ref{scale}) and (\ref{inject}) are computed respectively for each scenario. The results are shown as histograms in Fig. \ref{att_tree}
\begin{figure*}
\centering
\includegraphics[width=6.8 in]{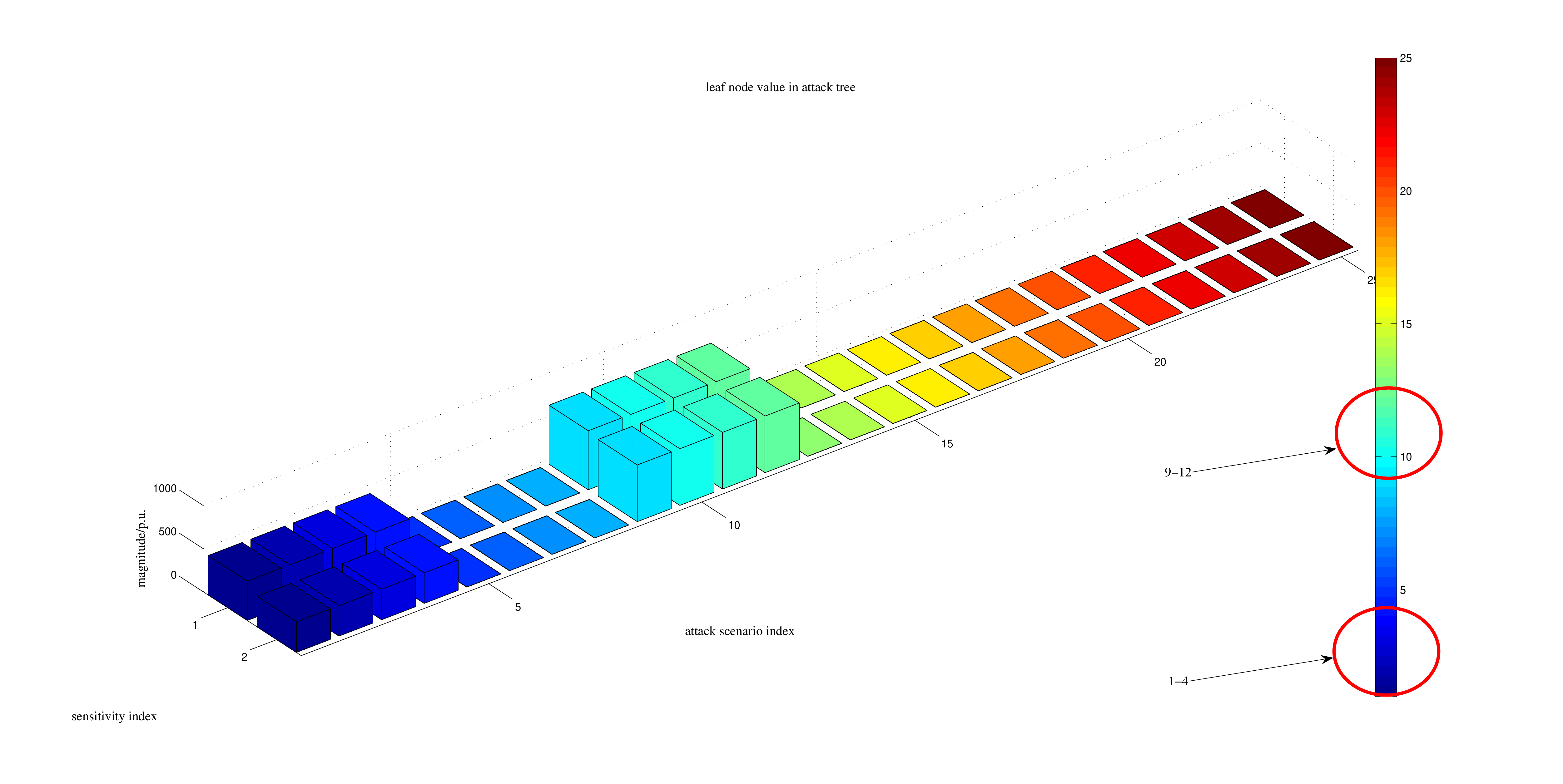}
\caption{Leaf node value of different attack scenarios in attack tree}
\label{att_tree}
\end{figure*}
where the number on $x$ axis represents the type of sensitivity, 1 means (\ref{ci0}) is used; 2 means (\ref{ci1}) is used. the number on $y$ axis represents the attack scenario index.
From Fig.\ref{att_tree} it can be learned that attacks on $f_0$, $f_1$ and $\Delta f_1$ (Scenario 9-12) produce the largest degree of disruption on both LFC performance (684) and generation (657), followed by $f_{11}$ ($f_{12}$) oriented attacks (Scenario 1-4). The difference is from the scale-down of weighted sum operator ${{{H_{is}}d} \mathord{\left/
 {\vphantom {{{H_{is}}d} {\sum\nolimits_{j = 1}^m {{H_{ij}}} }}} \right.
 \kern-\nulldelimiterspace} {\sum\nolimits_{j = 1}^m {{H_{ij}}} }}$ for the $f_{11}$ ($f_{12}$) oriented attacks, which is smaller than $d$ for $f_0$ ($\Delta f$) oriented attacks.

Nevertheless, tie-line power oriented attacks (e.g., $P^{11}_{tie,0}$ and $\Delta P^{11}_{tie}$) produce negligent disruption (1) compared with frequency oriented attacks. It stems from the amplification effect of $\beta_1$ in ACE; $\beta_1$ is the sum of the steady gain $1/R$ and damping constant $D$ of generators, which is usually several hundred to thousand.

It can also be learned that the node values for these two indices in the same scenario are not completely the same, which does not contradict with Theorem \ref{the1}. Since in Theorem \ref{the1}, it is assumed there exists no transmission loss. In practical power systems, transmission loss cannot be ignored. Moreover, when attack occurs, which is in the form of active power disturbance, the voltage profiles will also change, which further causes variation of voltage-dependent loads. However, this does not affect the interchangeableness of disruption of LFC performance and generation indices in general.
\subsection{Vulnerability Mitigation for Cyber Security of LFC}
In this section, classification-based scenario identification is simulated using Kundur's system. The simulated scenarios include: 1) normal load variation, 2) step attack, 3) random attack, 4) oscillating attack. The three classification algorithms in Section \ref{subsec_cla} are executed.

\subsubsection{Detection using Composite MLP-based Classifier}
\label{simu21}
MLP-based classifier is firstly tested. 240 ACE data instances are produced by simulating LFC on Kundur's system using MATLAB/SIMULINK (each scenario contains 60 data instances). It should be mentioned that the simulation model contains complete electromechanical dynamics and can reflect the characteristics of real system. Hence, the simulation data can be approximately used as the real-life data.

$N_1$, $N_2$ and $N_3$ in Fig. \ref{sche} are chosen as $100$, $60$ and $30$ respectively; $M$ is $3$ in this case. The detector is constructed with Keras, which is a high-level neural networks API employing TensorFlow as its backend. The whole dataset (DFT coefficients) is split into training (70\%) and test (30\%) datasets. In Fig. \ref{train}, four curves corresponding to loss and accuracy of the classifier with and without autoencoder are given. Since the number of epochs to train the model is set as 10, the first ten points on each curve quantify the performance in training epochs, and the last one quantifies the performance in testing. From Fig. \ref{train}, autoencoding assists the classifier in overfitting avoidance; and the generalization performance of the classifier is enhanced.
\begin{figure}[htbp]
\centering
\includegraphics[width=3.5 in]{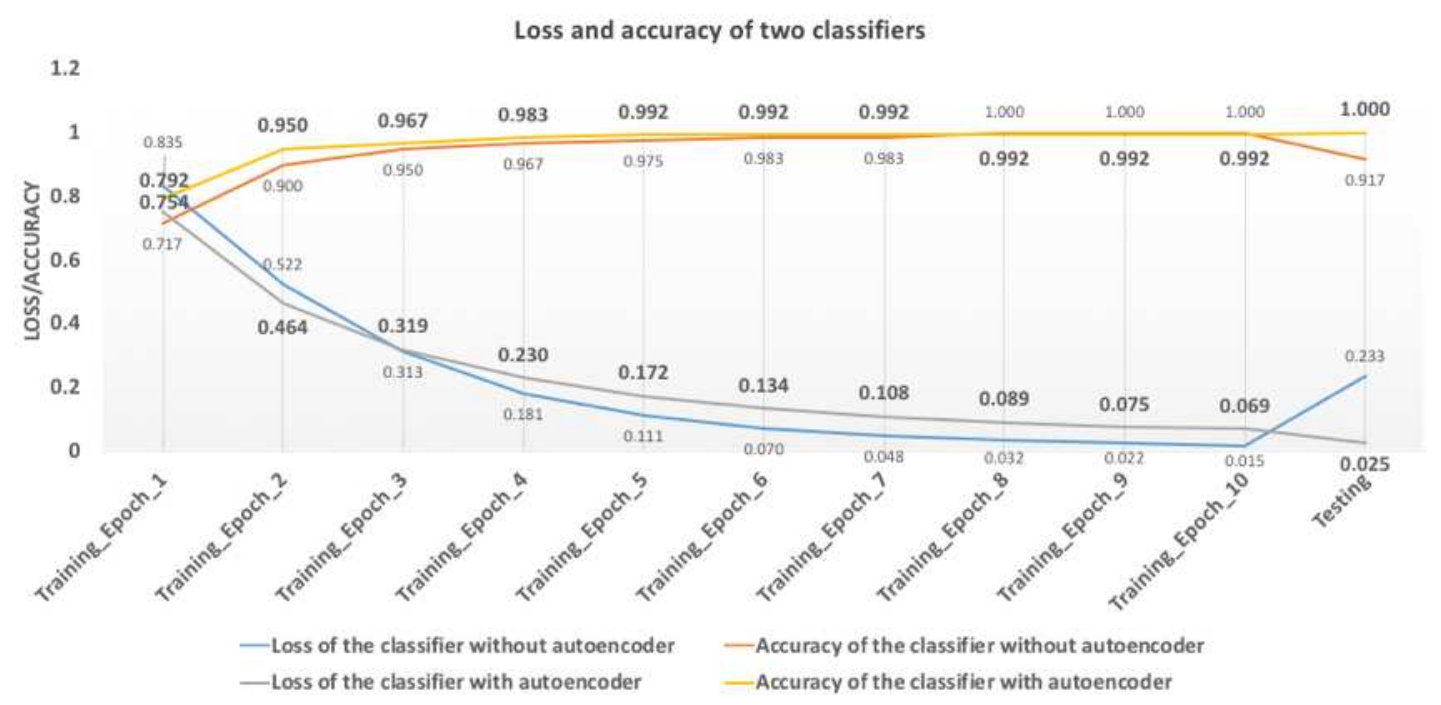}
\caption{Simulation results of MLP-based classifier}
\label{train}
\end{figure}
For comparison, the raw ACE time-series data are trained using long short time memory (LSTM) deep networks. The results are shown in Table \ref{t2}

\begin{table}[htpb]
\centering
\begin{tabular}{c c c}
\hline
 & detection accuracy & elapsed time ($s$)\\ \hline
 LSTM & 0.95 & 284.4\\
 \hline
 MLP  & 1 & 0.0023\\
 \hline
 \end{tabular}
\caption{Detection using LSTM and MLP network }
\label{t2}
\end{table}
As can be seen from Table \ref{t2}, though the difference of detection performance using LSTM and MLP is insignificant, the training time using LSTM is much larger than MLP.

As is mentioned before, cyber attacks might be rare events. In this case, the dataset is not evenly distributed.
We consider 6 data composition scenarios where the data instances under normal load variation are set to the fixed value $100$ and the data instances under attack scenarios are changed from $60$ to $10$ with an interval of $10$; meanwhile, the rate of data instances for test is set to $70\%$ to $30\%$ with an interval of $10\%$. Accuracy rate is calculated to identify the probability of each instance's belonging to its right class. The results are shown in Fig. \ref{mlp1}
\begin{figure}[htbp]
\centering
\includegraphics[width=3.5 in]{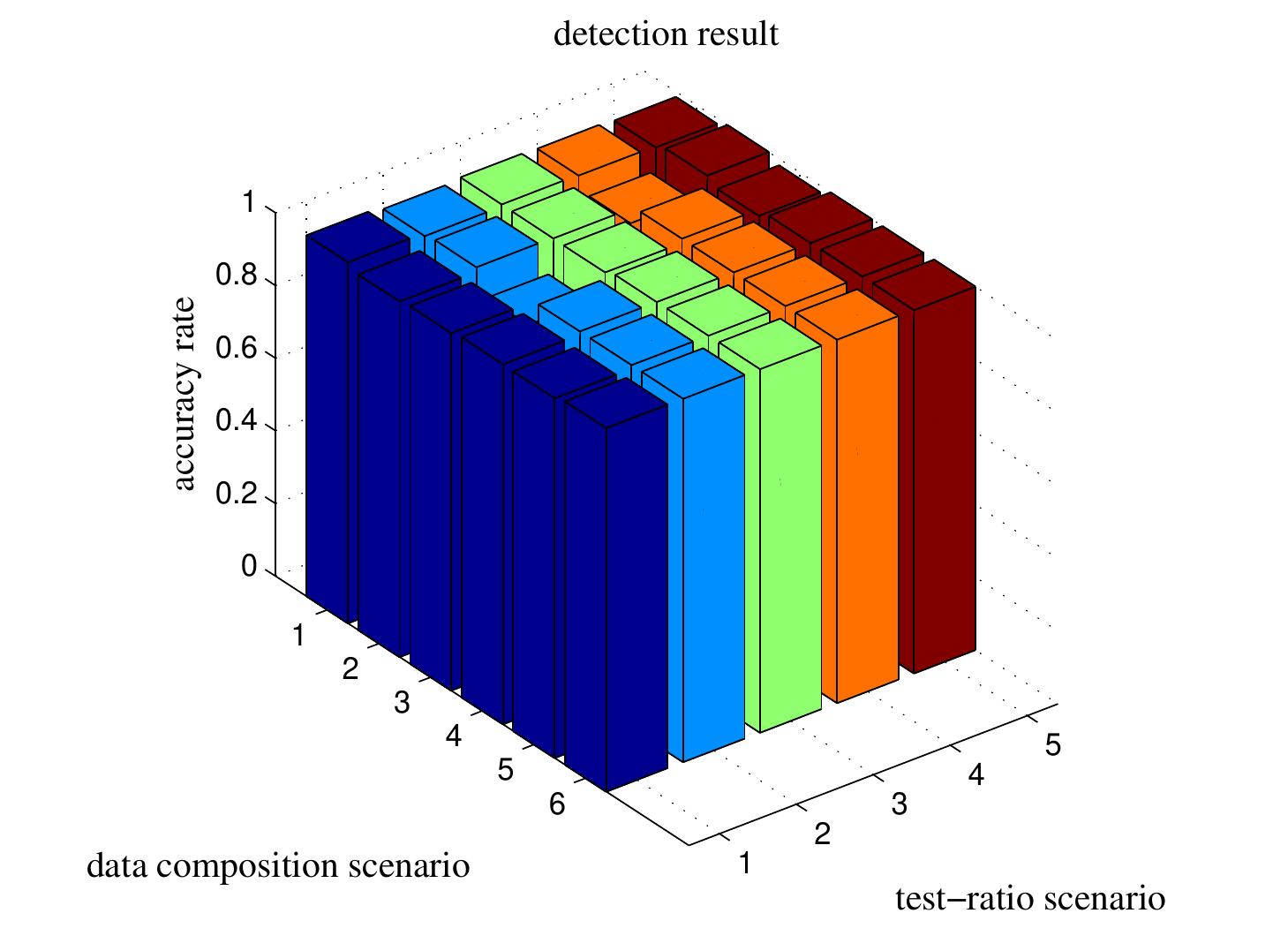}
\caption{Detection results under different data composition and test-ratio scenarios (MLP)}
\label{mlp1}
\end{figure}
In Fig. \ref{mlp1}, the number on $x$ axis represents the data composition scenario, e.g., $1$ indicates the data composition is: 100(normal load variation): 60(attack scenario 1): 60(attack scenario 2): 60(attack scenario 3); $2$ indicates the data composition is: 100(normal load variation): 50(attack scenario 1): 50(attack scenario 2): 50(attack scenario 3), and so forth. Similarly, the number on $y$ axis represents the test ratio scenario, e.g., $1$ indicates $70\%$ of the data is used for test while $30\%$ is used for training; $2$ indicates $60\%$ of the data is used for test while $40\%$ is used for training, and so forth. As can be seen in Fig. \ref{mlp1}, detection results (accuracy rate) under different training (testing) conditions are generally acceptable, even the 'worst' performance (under the condition where data composition is: 100:40:40:40 and test rate is $60\%$) is $0.925$. It can also be found that the scarcity of data instances from attack scenarios (100:10:10:10) does not significantly influence the detection performance, which is in compliance with actual detection conditions.

\subsubsection{Detection using other classifiers}
In this section, the classifiers based on Bayesian networks and SVM are tested.
The dataset (including data composition and test-ratio settings) are the same as Section \ref{simu21}.
The DFT coefficients are fed into Baye's networks and SVM, respectively. After learning the model, the test results are shown in Fig. \ref{gau} and \ref{svc}.
\begin{figure}[htbp]
\centering
\includegraphics[width=3.5 in]{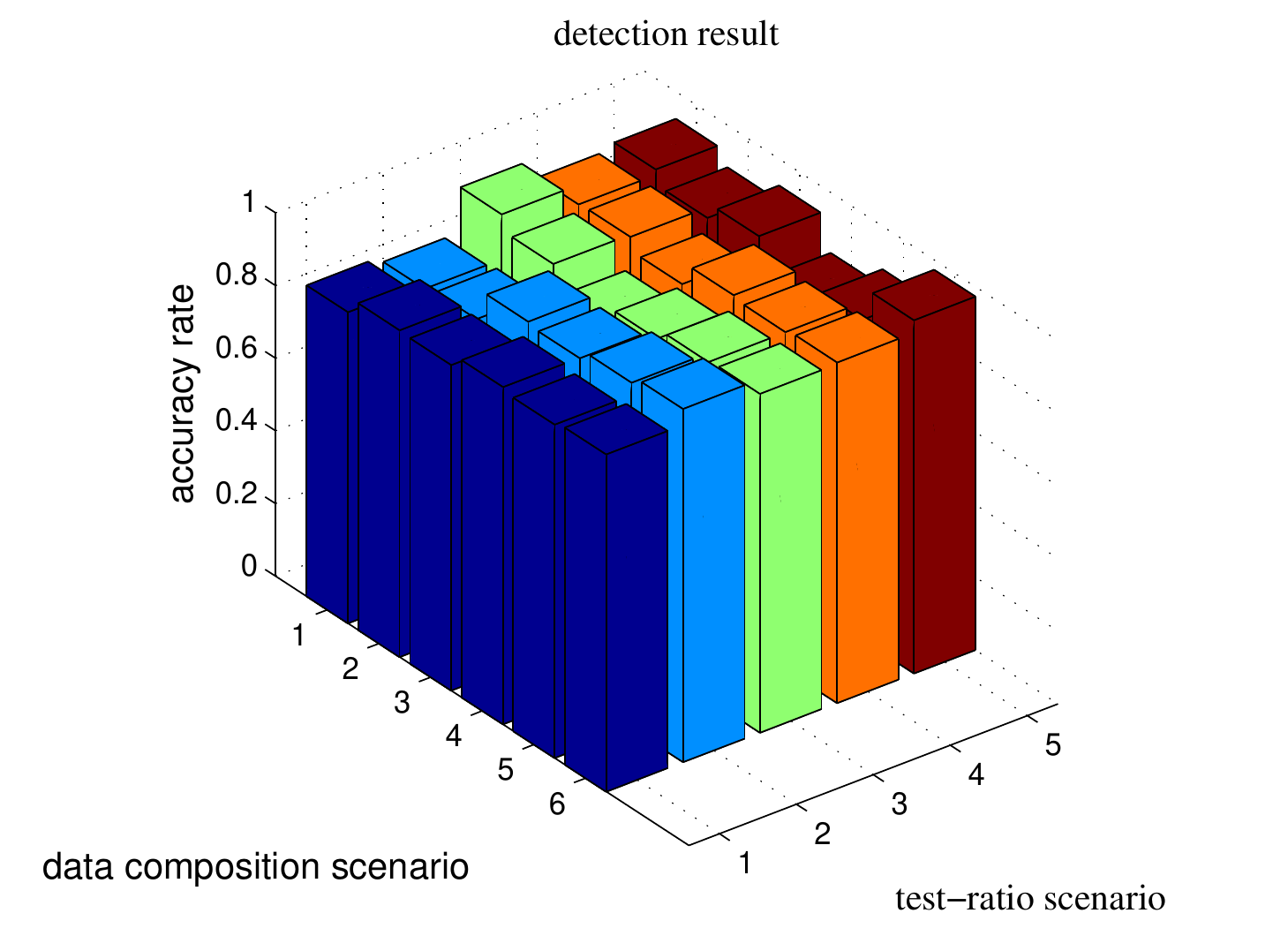}
\caption{Detection results under different data composition and test-ratio scenarios (Baye's network)}
\label{gau}
\end{figure}

\begin{figure}[htbp]
\centering
\includegraphics[width=3.5 in]{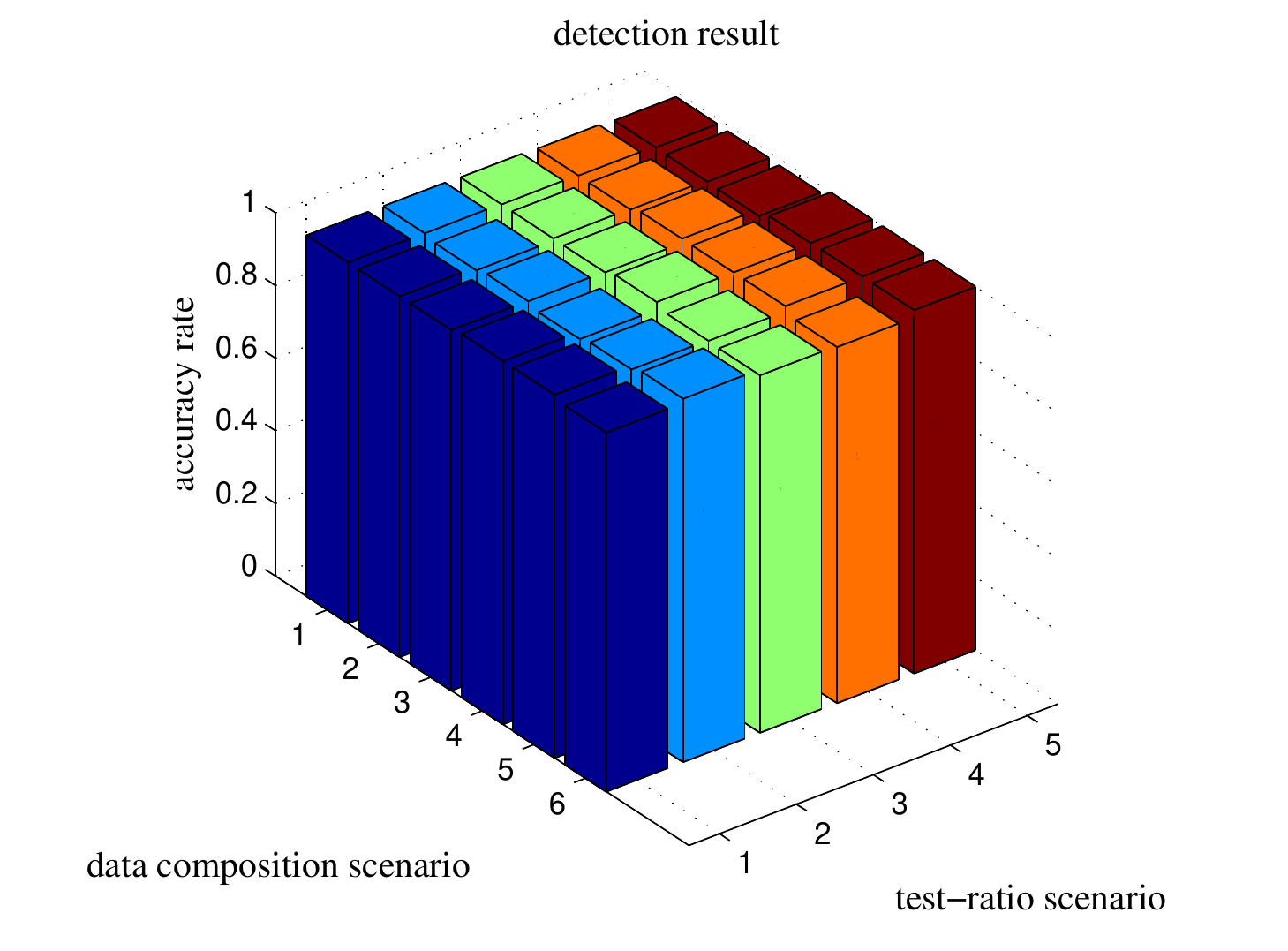}
\caption{Detection results under different data composition and test-ratio scenarios (SVM)}
\label{svc}
\end{figure}
Compared with Fig. \ref{mlp1} and \ref{svc}, it can be learned that identification performance using Bayesian networks is generally a little worse than MLP or SVM. Still, the 'worst' performance using Bayesian networks is acceptable (0.854). That is to say, the extracted feature (DFT coefficients) can effectively support the common basic classification algorithms for attack scenario identification, avoiding resorting to more complex and computationally inefficient learning models. The simulation results also show that training data quantities and the proportion of training (testing) data are permitted to vary in a wide range, verifying the robustness using the proposed feature extraction method.

\section{Conclusion}
In this paper, vulnerability for cyber security of LFC system is studied. Through the theoretical and numerical analyses in Section \ref{sec_att} and \ref{sec_cas}, it can be learned that system responses to attacks on different LFC components show varying severity degrees with frequency oriented attack producing the worst outcome. It means frequency related settings should be given priority in respect to protection. As for attack detection, it is learned that DFT coefficients under time-to-frequency-domain transform serve an effective feature for classification-based detection algorithm, which can identify different type of attack scenario and reduce computational burden.

\appendices
\section{Attack Scenarios in Section \ref{subsec_att}}
\label{appen1}
\begin{itemize}
  \item Scenario 1(2) injection (scale) attack on frequency measurement $f_{11}$ of Unit 1 of Area 1
  \item Scenario 3(4) injection (scale) attack on frequency measurement $f_{12}$ of Unit 2 of Area 1
  \item Scenario 5(6) injection (scale) attack on power measurement $P^{11}_{tie}$ of Tie-line 1 of Area 1
  \item Scenario 7(8) injection (scale) attack on power measurement $P^{12}_{tie}$ of Tie-line 2 of area 1
  \item Scenario 9(10) injection (scale)  attack on area frequency measurement  $f_1$
  \item Scenario 11 injection attack on nominal frequency  $f_0$
  \item Scenario 12(13) injection (scale) attack on area frequency deviation $\Delta f_1$ of Area 1
  \item Scenario 14(15) injection  attack on nominal power of Tie-line 1(Tie-line 2) $P^{11}_{tie,0}$ $P^{12}_{tie,0}$ of Area 1
  \item Scenario 16(17) injection (scale) attack on power deviation of Tie-line 1 $\Delta P^{11}_{tie}$ of Area 1
  \item Scenario 18(19) injection (scale) attack on power deviation of Tie-line 2 $\Delta P^{12}_{tie}$ of Area 1
   \item Scenario 20(21) injection (scale) attack on power deviation of Tie-line $\Delta P^{1}_{tie}$ of Area 1
    \item Scenario 22(23) injection (scale) attack on LFC order $comm_{11}$ to Unit 1 of Area 1
    \item Scenario 24(25) injection (scale) attack on LFC order $comm_{12}$ to Unit 2 of Area 1
\end{itemize}

\ifCLASSOPTIONcaptionsoff
  \newpage
\fi
\bibliographystyle{ieeetr}
\bibliography{sec}

\end{document}